\documentclass[11pt]{article}

\usepackage{amsmath,epsfig}
\usepackage{amssymb}
\usepackage{subfigure}
\usepackage{rotating}
\usepackage{tabularx}
\usepackage{multicol}
\usepackage{paralist}
\usepackage{multirow}
\usepackage{color}
\usepackage{verbatim}
\usepackage[ruled,vlined,shortend]{algorithm2e}
\usepackage{comment}
\usepackage{url}





\renewcommand{\emptyset}{\varnothing}







\newcommand{\atom}[1]{\underline{#1}}
\newcommand{\tuple}[1]{\mathbf{#1}}




\newcommand{\dep}{\Sigma}
\newcommand{\tdep}{\Sigma_T}

\newcommand{\kdep}{\Sigma_K}
\newcommand{\ndep}{\Sigma_\bot}

\newcommand{\isa}[1]{\mathit{ISA}}


\newcommand{\head}[1]{\mathit{head}(#1)}
\newcommand{\body}[1]{\mathit{body}(#1)}

\newcommand{\ans}[3]{\mathit{ans}(#1,#2,#3)}


\newcommand{\ins}[1]{\mathbf{#1}}

\newcommand{\insX}{\ins{X}}
\newcommand{\insY}{\ins{Y}}
\newcommand{\insZ}{\ins{Z}}


\newcommand{\chase}[2]{\mathit{chase}(#1,#2)}


\newcommand{\apchase}[3]{\mathit{chase}^{[#1]}(#2,#3)}

\newcommand{\mods}[2]{\mathit{mods}(#1,#2)}


\renewcommand{\paragraph}[1]{\textbf{#1}}





\newcommand{\A}{\mathcal{A}} 
 
\newcommand{\E}{\mathcal{E}} \newcommand{\F}{\mathcal{F}}

 \renewcommand{\L}{\mathcal{L}}

 \newcommand{\R}{\mathcal{R}}




\newcommand{\ra}{\rightarrow}

\newcommand{\la}{\leftarrow}












\newcommand{\tup}[1]{\langle #1\rangle}            


\def\qed{\hfill{\qedboxempty}      
  \ifdim\lastskip<\medskipamount \removelastskip\penalty55\medskip\fi}

\def\qedboxempty{\vbox{\hrule\hbox{\vrule\kern3pt
                 \vbox{\kern3pt\kern3pt}\kern3pt\vrule}\hrule}}

\def\qedfull{\hfill{\qedboxfull}   
  \ifdim\lastskip<\medskipamount \removelastskip\penalty55\medskip\fi}

\def\qedboxfull{\vrule height 4pt width 4pt depth 0pt}


\newcommand{\plusminus}{Datalog$^{\pm}$}

\setlength{\evensidemargin}{-0.1in}
\setlength{\oddsidemargin}{-0.1in}
\setlength{\marginparwidth}{0.6in} \setlength{\textwidth}{6.5in}
\setlength{\textheight}{10.1in} \setlength{\topmargin}{-0.7in}
\setlength\footnotesep{0.1cm}

\newtheorem{theorem}{Theorem}

\newtheorem{lemma}[theorem]{Lemma}
\newtheorem{claim}[theorem]{Claim}

\newtheorem{definition}{Definition}
\newtheorem{example}{Example}


\setlength{\parskip}{\medskipamount} \setlength{\itemsep}{-\parsep}



\begin{document}
\sloppy

\title{\textbf{Ontological Queries: Rewriting and Optimization (Extended
Version)}\footnote{This is an extended and revised version of the
paper~\cite{GoOP11}.}}

\date{}

\author{Georg Gottlob{\small $^{1,2}$},
Giorgio Orsi{\small $^{1,3}$}, Andreas Pieris{\small $^{1}$}\\
\vspace{0.4mm} \\
\fontsize{10}{10}\selectfont\itshape $~^{1}$Department of Computer Science, University of Oxford, UK\\
\fontsize{10}{10}\selectfont\rmfamily\itshape $~^{2}$Oxford-Man Institute of Quantitative Finance, University of Oxford, UK\\
\fontsize{10}{10}\selectfont\rmfamily\itshape $~^{3}$Institute for the Future of Computing, University of Oxford, UK\\
\vspace{0.4mm}\\
\fontsize{9}{9}\selectfont\ttfamily\upshape
\{georg.gottlob,giorgio.orsi,andreas.pieris\}@cs.ox.ac.uk}

\maketitle

\begin{abstract}
Ontological queries are evaluated against an ontology
rather than directly on a database. The evaluation and optimization
of such queries is an intriguing new problem for database research.
In this paper we discuss two important aspects of this problem:
query rewriting and query optimization. Query rewriting consists of
the compilation of an ontological query into an equivalent query
against the underlying relational database. The focus here is on
soundness and completeness. We review previous results and present a
new rewriting algorithm for rather general types of ontological
constraints. 
In particular, we show how a conjunctive query against an ontology can be compiled
into a union of conjunctive queries against the underlying database.
Ontological query optimization, in this context, attempts to improve
this process so to produce possibly small and cost-effective UCQ rewritings for an input query. We review existing optimization methods, and
propose an effective new method that works for {\em linear
{\plusminus}}, a class of Datalog-based rules that encompasses
well-known description logics of the \emph{DL-Lite} family.
\end{abstract}

\section{Introduction}\label{sec:introduction}

This paper is about ontological query processing, an important new
challenge to database research. We will review existing methods and
propose new algorithms for compiling  an {\em ontological query},
that is, a query against an ontology on top of a
relational database, into a direct query against this database, and
we will deal with optimization issues related to this process so as
to obtain possibly small and efficient compiled queries. In this
section, we first discuss a number of relevant concepts, and then
illustrate query rewriting and optimization processes in the context
of a small but non-trivial example.

\vspace{-2mm}

\paragraph{Ontologies.} The use of ontologies and ontological reasoning in companies,
governmental organizations, and other enterprises has become
widespread in recent years. An {\em ontology} is an {\em explicit
specification of a conceptualization} of an area of
interest~\cite{Gruber93}, and consists of {\em a formal
representation of knowledge as a set of concepts within a domain,
and the relationships between those concepts}~\cite{wiki-ontology}.
To distinguish an enterprise ontology from a data dictionary, Dave
McComb explicitly refers to the formal semantics of ontologies that
enables automated processing and inferencing, while the
interpretation of a data dictionary is strictly done by
humans~\cite{McComb06}. Moreover, ontologies have been adopted as
high-level conceptual descriptions of the data contained in data
repositories that are sometimes distributed and heterogeneous in the
data models. Due to their high expressive power, ontologies are also
substituting more traditional conceptual models such as UML
class-diagrams and E/R schemata.

\vspace{-2mm}

\paragraph{Description Logics.} Description logics (DLs) are logical
languages for expressing and modelling ontologies. The best known
DLs are those underlying the \emph{OWL}
language\footnote{http://www.w3.org/TR/owl2-overview/}. The main
ontological reasoning and query answering tasks in the complete OWL
language, called \emph{OWL Full}, are undecidable. For the most
well-known decidable fragments of OWL, ontological reasoning and
query answering is still computationally very hard, typically
2\textsc{exptime}-complete.

\vspace{-2mm}

In description logics, the ontological axioms are usually divided
into two sets: The {\em ABox} (assertional box), which essentially
contains factual knowledge such as ``IBM is a company'', denoted by
$\mathit{company}(\mathit{ibm})$, or ``IBM is listed on the
NASDAQ'', which could be represented as a fact of the form
$\mathit{list\_comp}(\mathit{ibm},\mathit{nasdaq})$, and a {\em
TBox} (terminological box) which contains axioms and constraints
that allow us, on the one hand, to infer new facts from those given
in the ABox, and, on the other hand, to express restrictions such
as keys. For example, a TBox may contain an axiom stating that for
each fact $\mathit{list\_comp}(X,Y)$, $Y$ must be a financial index,
which in DL is expressed as $\exists \mathit{list\_comp}^{-}
\sqsubseteq \mathit{fin\_idx}$. If the fact
$\mathit{fin\_idx}(\mathit{nasdaq})$ is not already present in the
ABox, it can be derived via the above axiom from
$\mathit{list\_comp}(\mathit{ibm},\mathit{nasdaq})$. Thus, the
atomic query ``$q(X) \leftarrow \mathit{fin\_idx}(X)$'' would return
$\mathit{nasdaq}$ as one of the answers. Note that the axiom
$\exists \mathit{list\_comp}^{-} \sqsubseteq \mathit{fin\_idx}$,
which corresponds to an inclusion dependency, is {\em enforced by
adding new tuples}, rather than just being {\em checked}. This is
one main difference between ontological constraints and classical
database dependencies. In database terms, the above axiom is to be
interpreted more like a {\em trigger} than a classical constraint.

\vspace{-2mm}

\paragraph{Ontology Based Data Access (OBDA).} We are currently witnessing
the marriage of ontological reasoning and database technology. In
fact, this amalgamation consists in the realization of the obvious
idea that ABoxes shall be implemented in form of a relational
database, or even stored in classical RDBMSs. Moreover, very large
existing databases are semantically enriched with ontological
constraints. There are a number of recent commercial systems and
experimental prototypes that extend RDBMSs with the
possibility of querying an ontology that is rooted in a database
(for examples, see Section~\ref{sec:related}). The main problem here
is how to couple these two  different types of technology smoothly
and efficiently, and this is also the main theme of the present
paper.

\vspace{-2mm}

One severe obstacle to efficient OBDA is the already mentioned high
computational complexity of query answering with description logics.
The situation clearly worsens when the ABoxes of enterprise
ontologies are very large databases. To tackle this problem, new,
lightweight DLs have been designed, that guarantee {\em
polynomial-time data complexity} for conjunctive query answering.
This means that based on a fixed TBox, a fixed query can be
answered in polynomial time over variable databases. The best-known
and best-studied examples of such lightweight DLs are the
\emph{DL-Lite}~\cite{CDLLLR07} and $\E\L$ (see,
e.g.,~\cite{Baader03}) families. These languages can be considered
tractable subclasses of OWL. It was convincingly argued that simple
DLs such as DL-Lite or $\E\L$ are sufficient for modelling an
overwhelming number of applications.

\vspace{-2mm}

More recently, the {\plusminus} family of description logics was
introduced~\cite{CGL09,CaGL09,CGP10,CaGP10}. Its syntax is based on
classical first-order logic, more specifically, on variants of the
well-known Datalog language~\cite{CGT89}. The basic Datalog$^\pm$
rules are known as \emph{tuple-generating dependencies} (TGDs) in
the database literature~\cite{BeVa84}. Tractable DLs in this
framework are {\em guarded {\plusminus}}, which is noticeably more
general than both DL-Lite and $\E\L$, and the DLs {\em linear
{\plusminus}} and {\em sticky-join {\plusminus}}, which both
encompass DL-Lite.

\vspace{-2mm}

Besides being more expressive than DL-Lite, suitable {\plusminus}
languages offer a more compact representation of the attributes of
concepts and roles, since description logics are usually restricted
to unary and binary predicates only. Consider, as an example, a
relation $\mathit{stock}(\underline{{\sf id}},{\sf name},{\sf
unit}$-${\sf price})$. Representing this relation in DL would
require the introduction of a concept symbol $\mathit{stock}$, and
of three attribute symbols $\mathit{id}$, $\mathit{name}$ and
$\mathit{unit}$-$\mathit{price}$. These entities must be then weaved
together by the TBox formula $\mathit{stock} \sqsubseteq \exists id
\sqcap \exists \mathit{name} \sqcap \exists
\mathit{unit}$-$\mathit{price}$. {\plusminus} represents the
relation in a natural way by means of a ternary predicate
$\mathit{stock}$. In the same way, {\plusminus} provides a more
natural syntax for many other DL formulae; for example, an inverse
role assertion $r \sqsubseteq s^{-}$ is represented as a (full) TGD
$r(X,Y) \rightarrow s(Y,X)$, while an existential restriction $p
\sqsubseteq \exists r.q$ is represented as a (partial) TGD $p(X)
\rightarrow \exists Y \, r(X,Y),q(Y)$.

\vspace{-2mm}

\paragraph{First-Order Rewritability.} Polynomial-time tractability is
often considered not to be good enough for efficient query
processing. Ideally, one would like to achieve the same complexity
as for processing SQL queries, or, equivalently, first-order (FO)
queries. An ontology language $\mathcal{L}$ is \emph{first-order rewritable} if, for every TBox $\dep$
expressed in $\mathcal{L}$ and a query $q$, a first-order query $q_{\dep}$ (called the {\em perfect rewriting}) can be constructed such that, given a database $D$, 
$q_{\dep}$ evaluated over $D$ yields exactly the same result as $q$ evaluated against
$D$ and $\dep$. Since answering first-order
queries is in the class \textsc{ac}$_{0}$ in data
complexity~\cite{Var95}, it immediately follows that under
FO-rewritable TGDs, query answering is also in \textsc{ac}$_{0}$ in
data complexity

\vspace{-2mm}

This notion was first introduced by Calvanese et al.~\cite{CDLLLR07}
in the concept of description logics. If a DL guarantees the
FO-rewritability of each query under every TBox, we simply say that
the logic is FO-rewritable. FO-rewritability is a most desirable
property since it ensures that the reasoning process can be largely
decoupled from data access. In fact, to answer query $q$, a separate
software can compile $q$ into $q_{\dep}$, and then just submit
$q_{\dep}$ as a standard SQL query to the DBMS holding $D$, where it
is evaluated and optimized in the usual way.

\vspace{-2mm}

Excitingly, it was shown that the members of the DL-Lite family, as
well as the slightly more expressive language linear {\plusminus}
are FO-rewritable. Moreover, even the much more expressive language
of sticky-join {\plusminus} is FO-rewritable. For these languages, a
pair $\tup{\Sigma,q}$, where $q$ is a CQ, is rewritten as an SQL
expression equivalent to a UCQ $q_{\dep}$. The research challenge we
address in this paper is precisely the question of how to rewrite
$\tup{\Sigma,q}$ to $q_{\dep}$ correctly and efficiently. Let us
illustrate this process by a small, but comprehensive example.

\vspace{-2mm}

Consider the following relational schema $\mathcal{R}$ representing
financial information about companies and their stocks:
\[
\begin{array}{rcl}
&& \mathit{stock}({\sf id},{\sf name},{\sf unit}$-${\sf price})\\
&& \mathit{company}({\sf name},{\sf country},{\sf segment})\\
&& \mathit{list\_comp}({\sf stock},{\sf list})\\
&& \mathit{fin\_idx}({\sf name},{\sf type},{\sf ref}$-${\sf mkt})\\
&& \mathit{stock\_portf}({\sf company},{\sf stock},{\sf qty}).
\end{array}
\]
The $\mathit{stock}$ relation contains information about stocks such
as the name, and the price per unit. The relation $\mathit{company}$
contains information about companies; in particular, the name, the
country, and the market segment of a company.
The relation $\mathit{list\_comp}$ relates a stock to a financial
index (i.e., NASDAQ, FTSE, NIKKEI) represented by the relation
$fin\_idx$ which, in turn, contains information about the types of
stocks in the index, and the reference market (e.g., London Stock
Exchange). Finally, $\mathit{stock\_portf}$ relates companies to
their stocks along with an indication of the amount of the
investment.

\vspace{-2mm}

Datalog$^\pm$ provides the necessary expressive power to extend
$\mathcal{R}$ with ontological constraints in an easy and intuitive
way. Examples of such constraints follow:

\vspace{-2mm}

{\small
\[
\begin{array}{rcl}
\sigma_1 &:& \mathit{stock\_portf(X,Y,Z)} \rightarrow \exists V \exists W\ \mathit{company(X,V,W)}\\
\sigma_2 &:& \mathit{stock\_portf(X,Y,Z)} \rightarrow \exists V \exists W\ \mathit{stock(Y,V,W)}\\
\sigma_3 &:& \mathit{list\_comp(X,Y)} \rightarrow \exists Z \exists W\ \mathit{fin\_idx(Y,Z,W)}\\
\sigma_4 &:& \mathit{list\_comp(X,Y)} \rightarrow \exists Z \exists W\ \mathit{stock(X,Z,W)}\\
\sigma_5 &:& \mathit{stock\_portf(X,Y,Z)} \rightarrow \mathit{has\_stock(Y,X)}\\
\sigma_6 &:& \mathit{has\_stock(X,Y)} \rightarrow \exists Z\ \mathit{stock\_portf(Y,X,Z)}\\
\sigma_7 &:& \mathit{stock(X,Y,Z)} \rightarrow \exists V \exists W\ \mathit{stock\_portf(V,X,W)}\\
\sigma_8 &:& \mathit{stock(X,Y,Z)} \rightarrow \mathit{fin\_ins(X)}\\
\sigma_9 &:& \mathit{company(X,Y,Z)} \rightarrow \mathit{legal\_person(X)}\\
\delta_1 &:& \mathit{legal\_person(X,Y,Z), fin\_ins(X,V,W)
\rightarrow \bot}.
\end{array}
\]
}

\begin{figure*}
\caption{A (partial) rewriting for the Stock Exchange example.}
\label{fig:rewriting} \centering
\begin{tabular}{|l|}
\hline
\small $q^{[0]}(A,B,C) \leftarrow \mathit{fin\_ins}(A),\mathit{stock\_portf}(B,A,D),\mathit{company}(B,E,F),\mathit{list\_comp}(A,C),\mathit{fin\_idx}(C,G,H)$\\
\small $q^{[1]}(A,B,C) \leftarrow \mathit{fin\_ins}(A),\underline{\mathit{has\_stock}(A,B)},\mathit{company}(B,E,F),\mathit{list\_comp}(A,C),\mathit{fin\_idx}(C,G,H)$\\
\small $q^{[2]}(A,B,C) \leftarrow \mathit{fin\_ins}(A),\mathit{has\_stock}(A,B),\underline{\mathit{stock\_portf}(B,E,F)},\mathit{list\_comp}(A,C),\mathit{fin\_idx}(C,G,H)$\\
\small $q^{[3]}(A,B,C) \leftarrow \underline{\mathit{stock}(A,J,K)},\mathit{has\_stock}(A,B),\mathit{stock\_portf}(B,E,F),\mathit{list\_comp}(A,C),\mathit{fin\_idx}(C,G,H)$\\
$\ldots$\\
\hline
\end{tabular}
\end{figure*}

\vspace{-2mm}

The first four TGDs set the ``domain'' and the ``range'' of the
$\mathit{stock\_portf}$ and $\mathit{list\_comp}$ relations,
respectively. TGDs $\sigma_5$ and $\sigma_6$ assert that
$\mathit{stock\_portf}$ and $\mathit{has\_stock}$ are ``inverse
relations'', while $\sigma_7$ expresses that each stock must belong
to some stock portfolio. TGDs $\sigma_8$ and $\sigma_9$ model
taxonomic relationships such as the facts that each stock is a
financial instrument, and each company is a legal person. Finally,
the negative constraint $\delta_1$ (where $\bot$ denotes the truth
constant $\mathit{false}$) states that legal persons and financial
instruments are disjoint sets.

\vspace{-2mm}

Consider now the following conjunctive query $q$ asking for all the
triples $\tup{a,b,c}$, where $a$ is a financial instrument owned by
the company $b$ and listed on $c$:
\[
\begin{array}{rcl}
\mathit{q(A,B,C)} &\leftarrow& \mathit{fin\_ins}(A),
\mathit{stock\_portf}(B,A,D), \mathit{company}(B,E,F),\\
&&\mathit{list\_comp}(A,C), \mathit{fin\_idx}(C,G,H).
\end{array}
\]
Since $\dep = \{\sigma_1,\ldots,\sigma_9\}$ is a set of linear TGDs,
i.e., TGDs with single body-atom, query answering under $\dep$ is
FO-rewritable. Thus, it is possible to reformulate $\tup{\dep,q}$ to
a first-order query $q_{\Sigma}$ such that, for every database $D$,
$D \cup \dep \models q$ iff $D \models q_\dep$.
A naive rewriting procedure would use the TGDs of $\Sigma$ as
rewriting rules for the atoms in $q$ to generate all the CQs of the
perfect rewriting. Figure~\ref{fig:rewriting} shows a (partial)
rewriting for $q$, where the query obtained at the $i$-th step is
denoted as $q^{[i]}$, and the newly introduced atoms are underlined.
In particular, $q^{[0]}$ is the given query $q$, $q^{[1]}$ is
obtained from $q^{[0]}$ by using $\sigma_6$, $q^{[2]}$ is obtained
from $q^{[1]}$ by applying $\sigma_1$, and $q^{[3]}$ is obtained
from $q^{[2]}$ by using $\sigma_8$.

\vspace{-2mm}

The complete perfect rewriting contains more than 200 queries
executing more than 1000 joins. However, by exploiting the set of
constraints, it is possible to eliminate redundant atoms in the
generated queries, and thus reduce the number of the CQs in the
rewritten query.
For example, in the given query $q$ above it is possible to
eliminate the atom $\mathit{fin\_ins(A)}$ since, due to the
existence of the TGDs $\sigma_2$ and $\sigma_8$ in $\dep$, if the
atom $\mathit{stock\_portf}(B,A,D)$ is satisfied, then immediately
the atom $\mathit{fin\_ins(A)}$ is also satisfied.
Notice that by eliminating a redundant atom from a query, we also
eliminate all the atoms that are generated starting from it during
the rewriting process.
Moreover, due to the TGD $\sigma_3$, if the atom
$\mathit{list\_comp(A,C)}$ in $q$ is satisfied, then the atom
$\mathit{fin\_idx(C,G,H)}$ is also satisfied, and therefore can be
eliminated.
Finally, due to the TGD $\sigma_1$, if the atom
$\mathit{stock\_portf}(B,A,D)$ is satisfied, then the atom
$\mathit{company(B,E,F)}$ is also satisfied, and hence is redundant.
The query that has to be considered as input of the rewriting
process is therefore $q(A,B,C) \leftarrow
\mathit{stock\_portf}(B,A,D), \mathit{list\_comp}(A,C)$ that
produces a perfect rewriting containing the following two queries
executing only two joins:
\[
\begin{array}{rcl}
q(A,B,C) &\leftarrow& \mathit{list\_comp}(A,C), \mathit{stock\_portf}(B,A,D)\\
q(A,B,C) &\leftarrow& \mathit{list\_comp}(A,C),
\mathit{has\_stock}(A,B).
\end{array}
\]

\vspace{-2mm}

\paragraph{Contributions and Roadmap.} After a review of previous work on
ontology based data access in the next section, and some formal
definitions and preliminaries in Section~\ref{sec:preliminaries}, we
present a short overview of the {\plusminus} family in
Section~\ref{sec:datalogpm-family}.
We then proceed with new research results. In
Section~\ref{sec:datalog-odba}, we propose a new rewriting algorithm
that improves the one stated in~\cite{WCGP10} by substantially
reducing the number of redundant queries in the perfect rewriting.
In Section~\ref{sec:optimization}, we  present a polynomial-time
optimization strategy based on the early-pruning of redundant atoms
produced during the rewriting process.
An implementation and experimental evaluation of the new method is
discussed in Section~\ref{sec:experiments}. We also discuss the
relationship between our optimization technique and optimal query
minimization algorithms such as the \emph{chase \& back-chase}
algorithm~\cite{DPT06}. We conclude with a brief outlook on further
research.

\section{Ontology Based Data Access}
\label{sec:related}

Answering queries under constraints and the related optimization
techniques are important topics in data management beyond the
obvious research interest. In particular, they are profitable
opportunities for companies that need to deliver efficient and
effective data management solutions to their customers. This trend
is becoming even more evident as a plethora of robust systems and
APIs for Semantic Web data management proposed in the recent years.
These systems span from open-source solutions such as
Virtuoso\footnote{http://virtuoso.openlinksw.com/},
Sesame\footnote{http://www.openrdf.org/}, RDFSuite~\cite{alexaki01},
KAON\footnote{http://kaon.semanticweb.org/} and
Jena\footnote{http://jena.sourceforge.net/}, to commercial
implementations such as the semantic extensions implemented in
Oracle Database 11g R2~\cite{Ch05} and
BigOWLLim\footnote{http://www.ontotext.com/owlim/}.
In this Section we briefly analyze the systems providing
rewriting-based access to databases under ontological constraints,
and we highlight some crucial points that we want to address in this
work.

\vspace{-2mm}

We first present the class of constraints identified by the members
of the DL-Lite family~\cite{CDLLLR07}, namely,
DL-Lite$_{\mathcal{A}}$, DL-Lite$_{\mathcal{F}}$, and
DL-Lite$_{\mathcal{R}}$, underlying the W3C OWL-QL profile of the
OWL language. These constraints correspond to unary and binary
\emph{inclusion dependencies} combined with a restricted form of
\emph{key constraints}. In order to perform query answering under
this class of constraints, a rewriting algorithm, introduced
in~\cite{CDLLLR07} and implemented in the QuOnto system,
reformulates the given query into unions of conjunctive queries. The
size of the reformulated query is unnecessarily large due to a
number of reasons. In the first place, \emph{(i)} basic optimization
techniques such as the identification of the connected components in
the body of the input query, or the computation of any form of query
decomposition~\cite{GLS99}, are not applied. Moreover, \emph{(ii)}
the fact that the given set of constraints can be used to identify
existential joins in the reformulated query which can be eliminated
is not exploited. Finally, \emph{(iii)} the factorization step
(which is needed to guarantee completeness) is applied exhaustively,
and as a result many superfluous queries are generated.

\vspace{-2mm}

Per\'ez-Urbina et al.~\cite{PMH09} proposed an alternative
resolution-based rewriting algorithm, implemented in the Requiem
system, that addressed the issue of the useless factorizations (and
therefore of the redundant queries generated due to this weakness)
by directly handling existential quantification through proper
functional terms. The algorithm has then been extended to more
expressive DL languages~\cite{PMH09}. In this case the output of the
rewriting is a Datalog program.

\vspace{-2mm}

Rosati et al.~\cite{AR10} recently proposed a very sophisticated
rewriting technique, implemented in the Presto system, that
addresses some of the issues described above. In particular,
\emph{(i)} the unnecessary existential joins are eliminated by
resorting to the concept of \emph{most-general subsumees}, which
also avoids the unnecessary factorizations, and \emph{(ii)} the
connectivity of the given query is checked before executing the
algorithm; in case the query is not connected, Presto splits the
query in connected components and rewrites them separately.
Notice that Presto produces a non-recursive Datalog program, and not
a union of conjunctive queries. This allows the ``hiding'' of the
exponential blow-up inside the rules instead of generating
explicitly the disjunctive normal form. The final rewriting is
exponential only in the number of non-eliminable existential joins,
but not in the size of the input query.


\vspace{-2mm}

The approaches presented above have been proven very effective when
applied to very particular classes of description logic constraints.
Following a more general approach for ontological query answering,
Cal\`i et al.~\cite{WCGP10} presented a backward-chaining rewriting
algorithm which is able to deal with arbitrary sets of TGDs,
providing that the class of TGDs under consideration satisfies
suitable syntactic restrictions that guarantee the termination of
the algorithm. However, this algorithm is inspired by the original
QuOnto algorithm and inherits all its drawbacks.

\vspace{-2mm}

Despite the possibly exponential number of queries to be constructed, we know that all such queries are
independent from each other, and thus they can be easily executed in
parallel threads and distributed on multiple processors.
Notice that a non-recursive Datalog program is not equally easy to distribute.
Moreover, the optimizations implemented in current DBMS systems for
(unions of) conjunctive queries are much more advanced than those
implemented for the positive existential first-order queries
resulting from the translation of a non-recursive Datalog program
into a concrete query language such as SQL. It is clear that a
trade-off between these two approaches must be found in order to
exploit as much as possible the current optimization techniques,
while keeping the size of the rewriting reasonably small in order to
make the execution of it feasible in practice.

\vspace{-2mm}

A related research field is that of query minimization~\cite{CM77},
in particular, in presence of views and
constraints~\cite{Hal01,DPT06}. Given a conjunctive query $q$, and a
set of constraints $\Sigma$, the goal is to find all the minimal
equivalent reformulations of $q$ under the constraints of $\Sigma$.
The most interesting approach in this respect is the chase \&
back-chase algorithm (C\&B)~\cite{DPT06}, implemented in the MARS
system~\cite{DeTa03}. The algorithm freezes the atoms of $\body{q}$
and, by considering them as a database $D_q$, applies the following
two steps.
During the {\em chase-step}, the chase of $D_q$ w.r.t. $\dep$ is
constructed, and then the atoms of $\chase{D_q}{\Sigma}$ are
considered as the body-atoms of a query $q_{u}$, called the {\em
universal plan}.
The {\em back-chase step} considers all the possible subsets of the
atoms of $\body{q_u}$, starting from those with a single-atom, which
are then considered as the body of a query $q'$. Whenever there
exists a containment mapping from $\body{q_u}$ to
$\chase{D_{q'}}{\dep}$, where $D_{q'}$ is the database obtained by
freezing $\body{q'}$, then $q^{\prime}$ is an equivalent
reformulation of $q$.
Moreover, every time an equivalent reformulation $q^{\prime}$ is
found, the back-chase does not consider any of the supersets of the
atoms of $\body{q^{\prime}}$ because they will be automatically
implied by the atoms of $q^{\prime}$, and therefore the produced
query would be redundant. This particular exploration strategy
guarantees the minimality of the reformulations.
A non-negligible drawback of this approach is the fact that we need
to compute the chase of $D_q$ w.r.t. $\dep$, and also the chase for
the (exponentially many) databases $D_{q'}$ w.r.t $\dep$. Clearly,
this makes the procedure computationally expensive.

\section{Preliminaries}
\label{sec:preliminaries}

In this section we recall some basics on relational databases,
conjunctive queries, tuple-generating dependencies, and the chase
procedure.

\subsection{Relational Databases and Conjunctive Queries}
\label{subsec:db-cq}

Consider two pairwise disjoint (infinite) sets of symbols $\Delta_c$
and $\Delta_z$ such that: $\Delta_c$ is a set of \emph{constants}
(which constitutes the domain of a database), and $\Delta_z$ is a
set of \emph{labeled nulls} (used as placeholders for unknown
values). Different constants represent different values
(\emph{unique name assumption}), while different nulls may represent
the same value.
Throughout the paper, we denote by $\insX$ sequences of variables
$X_1,\ldots,X_k$, where $k \geq 0$, and by $[n]$ the set
$\{1,\ldots,n\}$, for any $n \geq 1$.

\vspace{-2mm}

A \emph{relational schema} $\mathcal{R}$ (or simply \emph{schema})
is a set of \emph{relational symbols} (or \emph{predicate symbols}),
each with its associated arity. A \emph{position} $r[i]$ (or
$\tup{r,i}$) is identified by a predicate $r \in \R$ and its $i$-th
argument. A \emph{term} $t$ is a constant, labeled null, or
variable. An \emph{atomic formula} (or simply \emph{atom}) has the
form $r(t_1, \ldots, t_n)$, where $r \in \R$ has arity $n$, and
$t_1,\ldots,t_n$ are terms. Conjunctions of atoms are often
identified with the sets of their atoms.

\vspace{-2mm}

A \emph{substitution} from one set of symbols $S_{1}$ to another set
of symbols $S_{2}$ is a function $h : S_{1} \rightarrow S_{2}$.
A \emph{homomorphism} from a set of atoms $A_{1}$ to a set of
atoms $A_{2}$, both over the same schema $\R$, is a substitution $h$
from the set of terms of $A_1$ to the set of terms of $A_2$ such
that: \textit{(i)} if $t \in \Delta_c$, then $h(t) = t$, and
\textit{(ii)} if $r(t_{1}, \ldots, t_{n})$ is in $A_{1}$, then
$h(r(t_{1}, \ldots, t_{n})) = r(h(t_{1}), \ldots, h(t_{n}))$ is in
$A_{2}$. The notion of homomorphism naturally extends to
conjunctions of atoms.

\vspace{-2mm}

A \emph{relational instance} (or simply \emph{instance}) $I$ for a
schema $\mathcal{R}$ is a (possibly infinite) set of atoms of the
form $r({\bf t})$, where $r \in \mathcal{R}$ has arity $n$ and ${\bf
t} \in (\Delta_{c} \cup \Delta_{z})^{n}$. A \emph{database} is a
finite relational instance.
A \emph{conjunctive query} (CQ) $q$ of arity $n$ over a schema $\R$
is a formula of the form $q(\insX) \leftarrow \phi(\insX,\insY)$,
where $\phi(\insX,\insY)$ is a conjunction of atoms over $\R$, and
$q$ is an $n$-ary predicate. $\phi(\insX,\insY)$ is called the {\em
body} of $q$, denoted as $body(q)$, and $q(\insX)$ is the head of
$q$, denoted as $head(q)$. A {\em Boolean conjunctive query} (BCQ)
is a CQ of arity zero.
The {\em answer} to a CQ $q$ of arity $n$ over an instance $I$,
denoted as $q(I)$, is the set of all $n$-tuples $\tuple{t} \in
(\Delta_c)^{n}$ for which there exists a homomorphism $h : \insX
\cup \insY \rightarrow \Delta_c \cup \Delta_z$ such that
$h(\phi(\insX,\insY)) \subseteq I$ and $h(\insX) = \tuple{t}$. A BCQ
has only the empty tuple $\langle \rangle$ as possible answer, in
which case it is said that has positive answer. Formally, a BCQ has
\emph{positive} answer over $I$, denoted as $I \models q$, iff
$\langle \rangle \in q(I)$.
A \emph{union of CQs} (UCQ) $Q$ of arity $n$ is a set of CQs, where
each $q \in Q$ has the same arity $n$ and uses the same predicate
symbol in the head. The answer to $Q$ over an instance $I$, denoted
as $Q(I)$, is defined as the set of tuples
$\{\tuple{t}~|~\textrm{there~exists~} q \in Q \textrm{~such~that~}
\tuple{t} \in q(I)\}$.

\subsection{Tuple-Generating Dependencies}
\label{subsec:tgds}

A {\em tuple-generating dependency} (TGD) $\sigma$ over a schema
$\R$ is a first-order formula $\forall{\bf X} \forall{\bf Y}
\phi({\bf X}, {\bf Y}) \rightarrow \exists \insZ \, \psi({\bf X},
{\bf Z})$, where $\phi(\insX,\insY)$ and $\psi(\insX,\insZ)$ are
conjunctions of atoms over $\R$, called the {\em body} and the {\em
head} of $\sigma$, denoted as $\body{\sigma}$ and $\head{\sigma}$,
respectively. Henceforth, to avoid notational clutter, we will omit
the universal quantifiers in TGDs.
Such $\sigma$ is satisfied by an instance $I$ for $\R$ iff, whenever
there exists a homomorphism $h$ such that $h(\phi(\insX,\insY))
\subseteq I$, there exists an extension $h'$ of $h$ (i.e., $h'
\supseteq h$) such that $h'(\psi(\insX,\insZ)) \subseteq I$.

\vspace{-2mm}

We now define the notion of \emph{query answering} under TGDs. Given
a database $D$ for $\R$, and a set $\dep$ of TGDs over $\R$, the
\emph{models} of $D$ w.r.t.~$\dep$, denoted as $\mods{D}{\dep}$, is
the set of all instances $I$ such that $I \models D \cup \dep$,
which means that $I \supseteq D$ and $I$ satisfies $\dep$.  The
\emph{answer} to a CQ $q$ w.r.t.~$D$ and $\dep$, denoted as
$\ans{q}{D}{\dep}$, is the set $\{\tuple{t}~|~\tuple{t} \in q(I)
{\rm~for~each~} I \in \mods{D}{\dep}\}$.  The \emph{answer} to a BCQ
$q$ w.r.t.~$D$ and $\dep$ is \emph{positive}, denoted as $D \cup
\dep \models q$, iff $\ans{q}{D}{\dep} \neq \emptyset$. Note that
query answering under general TGDs is undecidable \cite{BeVa81},
even when the schema and the set of TGDs are fixed \cite{CGK08}.
We recall that the two problems of answering CQs and BCQs under TGDs
are equivalent~\cite{CM77,DNR08}. Roughly speaking, we can enumerate
the polynomially many tuples of constants which are possible answers
to $q$, and then, instead of answering the given query $q$, we
answer the polynomially many BCQs that we obtain by replacing the
variables in the body of $q$ with the appropriate constants. A
certain tuple $\tuple{t}$ of constants is in the answer of $q$ iff
the answer to the BCQ that we obtain from $\tuple{t}$ is positive.
Henceforth, we thus focus only on the BCQ answering problem.

\subsection{The TGD Chase}
\label{subsec:tgd-chase}

The \emph{chase procedure} (or simply \emph{chase}) is a fundamental
algorithmic tool introduced for checking implication of
dependencies~\cite{MMS79}, and later for checking query
containment~\cite{JoKl84}. Informally, the chase is a process of
repairing a database w.r.t.~a set of dependencies so that the
resulted database satisfies the dependencies. We shall use the term
chase interchangeably for both the procedure and its result. The
chase works on an instance through the so-called TGD \emph{chase
rule}.

\vspace{-2mm}

\textsc{TGD Chase Rule:} Consider a database $D$ for a schema $\R$,
and a TGD $\sigma : \phi(\insX,\insY) \rightarrow \exists
\insZ\,\psi(\insX,\insZ)$ over $\R$. If $\sigma$ is {\em applicable}
to $D$, i.e., there exists a homomorphism $h$ such that
$h(\phi(\insX,\insY)) \subseteq D$ then: \emph{(i)} define $h'
\supseteq h$ such that $h'(Z_{i}) = z_{i}$, for each $Z_{i} \in
\insZ$, where $z_{i} \in \Delta_z$ is a ``fresh'' labeled null not
introduced before, and \emph{(ii)} add to $D$ the set of atoms in
$h'(\psi(\insX,\insZ))$, if not already in $D$.

\vspace{-2mm}

Given a database $D$ and a set of TGDs $\dep$, the chase algorithm
for $D$ and $\dep$ consists of an exhaustive application of the TGD
chase rule in a breadth-first fashion, which leads as result to a
(possibly infinite) chase for $D$ and $\dep$, denoted as
$\chase{D}{\dep}$. For the formal definition of the chase algorithm
we refer the reader to~\cite{CaGL09}.

\vspace{-2mm}

The (possibly infinite) chase for $D$ and $\dep$ is a
\emph{universal model} of $D$ w.r.t.~$\dep$, i.e., for each instance
$I \in \mods{D}{\dep}$, there exists a homomorphism from
$\chase{D}{\dep}$ to $I$~\cite{DNR08,FKMP05}. Using this fact it can
be shown that $D \cup \dep \models q$ iff $\chase{D}{\dep} \models
q$, for every BCQ $q$.

\section{The {\plusminus} Family}
\label{sec:datalogpm-family}

In this section we present the main {\plusminus} languages under which
query answering is decidable, and (almost in all cases) also
tractable in data complexity.

\subsection{Decidability Paradigms}
\label{subsec:decidability-paradigms}

We first discuss the three main paradigms for ensuring decidability
of query answering, namely, chase termination, guardedness and
stickiness.

\vspace{-2mm}

\paragraph{Chase Termination.} In this case the chase always terminates and
produces a finite universal model $U$. Thus, given a query we just
need to evaluate it over the finite database $U$.
The most notable syntactic restriction of TGDs guaranteeing chase
termination is \emph{weak-acyclicity}, which is defined by means of
a graph-based condition, for which we refer the reader to the
landmark paper~\cite{FKMP05}. Roughly speaking, in the chase
constructed under a weakly-acyclic set of TGDs over a schema
$\mathcal{R}$, only a finite number of distinct values can appear at
any position of $\mathcal{R}$, and thus after finitely many steps
the chase procedure terminates. It is known that query answering
under a weakly-acyclic set of TGDs is \textsc{ptime}-complete
\cite{FKMP05} and 2\textsc{exptime}-complete~\cite{CaGP10} in data
and combined complexity, respectively.
More general syntactic restrictions that guarantee chase termination
were proposed in~\cite{DNR08} and~\cite{Mar09}.

\vspace{-2mm}

\paragraph{Guardedness.} \emph{Guarded} TGDs, introduced and studied
in~\cite{CGK08}, have an atom in their body, called the
\emph{guard}, that contains all the universally quantified
variables. For example, the TGD $r(X,Y ),s(X,Y,Z) \rightarrow
\exists W s(Z,X,W)$ is guarded via the guard atom $s(X,Y,Z)$, while
the TGD $r(X,Y),r(Y,Z) \rightarrow r(X,Z)$ is not.
Decidability of query answering follows from the fact that the chase
constructed under a set of guarded TGDs has the bounded treewidth
property, i.e., is a ``tree-like'' structure.
The data and combined complexity of query answering under a set of
guarded TGDs is \textsc{ptime}-complete~\cite{CGL09} and
2\textsc{exptime}-complete~\cite{CGK08}, respectively.

\vspace{-2mm}

\emph{Linear} TGDs, proposed in~\cite{CGL09}, is a FO-rewritable
variant of guarded TGDs. A TGD is linear iff it contains only one
atom in its body. Obviously a linear TGD is trivially guarded since
the singleton body-atom is automatically a guard. Linear
TGDs are more expressive than the well-known class of inclusion
dependencies.
Query answering under linear TGDs
is in the highly tractable class \textsc{ac}$_0$ in data
complexity~\cite{CGL09}. The same problem is \textsc{pspace}-complete
in combined complexity; this result is immediately implied by
results in~\cite{JoKl84}.

\vspace{-2mm}

An expressive class, which forms a generalization of guarded TGDs,
is the class of \emph{weakly-guarded} sets of TGDs introduced
in~\cite{CGK08}.
Intuitively speaking, a set $\Sigma$ of TGDs is weakly-guarded iff
in the body of each TGD of $\Sigma$ there exists an atom, called the
\emph{weak-guard}, that contains all the universally quantified
variables that appear only at positions where a ``fresh'' null of
$\Delta_z$ can appear during the construction of the chase.
Query answering under a weakly-guarded set of TGDs is
\textsc{exptime}-complete~\cite{CGK08} and
2\textsc{exptime}-complete~\cite{CGK08} in data and combined
complexity, respectively.

\vspace{-2mm}

\paragraph{Stickiness.} In this paragraph we present a \plusminus~language
(and its extensions), which hinges on a paradigm that is very
different from guardedness. \emph{Sticky} sets of TGDs are defined
formally by an efficiently testable condition involving
variable-marking~\cite{CGP10}. In what follows we just give an
intuitive definition of this class. For every database $D$, assume
that during the construction of chase of $D$ under a sticky set of
TGDs, we apply a TGD $\sigma \in \Sigma$ that has a variable $V$
appearing more than once in its body; assume also that $V$ maps (via
homomorphism) on the symbol $z$, and that by virtue of this
application the atom $\atom{a}$ is introduced. In this case, for
each atom $\atom{b}$ in $\body{\sigma}$, we say that $\atom{a}$ is
\emph{derived} from $\atom{b}$. Then, we have that $z$ appears in
$\atom{a}$ and in all atoms resulting from some chase derivation
sequence starting from $\atom{a}$, ``sticking'' to them (hence the
name ``sticky'' sets of TGDs). Interestingly, sticky sets of TGDs
are FO-rewritable, and thus query answering is feasible in
\textsc{ac}$_0$ in data complexity~\cite{CGP10}. Combined complexity
of query answering is known to be
\textsc{exptime}-complete~\cite{CGP10}.

\vspace{-2mm}

In~\cite{CaGP10} the FO-rewritable class of \emph{sticky-join} sets
of TGDs, that captures both linear TGDs and sticky sets of TGDs, is
introduced. Similarly to sticky sets of TGDs, sticky-join sets are
defined formally by a testable condition based on variable-marking.
However, this variable-marking procedure is more sophisticated than
the one used for sticky sets, and due to this fact the problem of
identifying sticky-join sets of TGDs is harder than the one of
identifying sticky sets. In particular, given a set $\Sigma$ of
TGDs, we can decide in \textsc{ptime} whether $\Sigma$ is sticky,
while the problem whether $\Sigma$ is sticky-join is
\textsc{pspace}-complete.
Note that the data and combined complexity of query answering under
sticky and sticky-join sets of TGDs coincide.

\subsection{Additional Features}
\label{subsec:additional-features}

In this subsection we briefly discuss how the languages presented
above can be combined with negative constraints and key
dependencies, without altering the complexity of query answering.

\vspace{-2mm}

\paragraph{Negative Constraints.} A \emph{negative constraint} (NC) $\nu$
over a schema $\R$ is a first-order formula $\forall {\bf X} \,
\phi({\bf X}) \rightarrow \bot$, where $\bot$ denotes the truth
constant {\em false}. NCs are vital when representing ontologies
(see, e.g., \cite{CGL09,CGP10}), as well as conceptual schemas such
as Entity-Relationship diagrams~(see, e.g., \cite{CGP09,CGP10-ER}).
With NCs we can assert, for example, that students and professors
are disjoint sets: $\forall X
\mathit{student}(X),\mathit{professor}(X) \ra \bot$. Also, we can
state that a student cannot be the leader of a research group:
$\forall X \forall Y \mathit{student}(X),\mathit{leads}(X,Y) \ra
\bot$.

\vspace{-2mm}

It is known that checking NCs is tantamount to query
answering~\cite{CGL09}. In particular, given an instance $I$, a set
$\ndep$ of NCs, and a set $\dep$ of TGDs, for each NC $\nu$ of the
form $\forall \insX \, \phi(\insX) \rightarrow \bot$, we answer the
BCQ $q_\nu() \la \phi(\insX)$. If at least one of such queries
answers positively, then $I \cup \dep \cup \ndep \models \bot$
(i.e., the theory is inconsistent), and therefore $I \cup \dep \cup
\ndep \models q$, for every BCQ $q$; otherwise, given a BCQ $q$, we
have $I \cup \dep \cup \ndep \models q$ iff $I \cup \dep \models q$,
i.e., we can answer $q$ by ignoring the set of NCs.

\vspace{-2mm}

\paragraph{Key Dependencies.} It is well-known that the interaction of
general TGDs and key dependencies (KDs) leads to undecidability of
query answering~\cite{ChVa85}; we assume that the reader is familiar
with the notion of KD (see, e.g., \cite{AHV95}). Thus, the classes
of TGDs presented above cannot be combined arbitrarily with KDs.
Suitable syntactic restrictions are needed in order to ensure
decidability of query answering.

\vspace{-2mm}

A crucial concept towards this direction is
separability~\cite{CLR03}, which formulates a controlled interaction
of TGDs and KDs. Formally speaking, a set $\dep = \tdep \cup \kdep$
over a schema $\R$, where $\tdep$ and $\kdep$ are sets of TGDs and
KDs, respectively, is \emph{separable} iff for every instance $I$
for $\R$, either $I$ violates $\kdep$, or for every BCQ $q$ over
$\R$, $I \cup \dep \models q$ iff $I \cup \tdep \models q$.
Notice that separability is a semantic notion. A sufficient
syntactic criterion for separability of TGDs and KDs is given
in~\cite{CGL09}; TGDs and KDs satisfying the criterion are called
\emph{non-conflicting}.

\vspace{-2mm}

Obviously, in case of non-conflicting sets of TGDs and KDs, we just
need to perform a preliminary check whether the given instance
satisfies the KDs, and if this is the case, then we eliminate them,
and proceed by considering only the set of TGDs.
This preliminary check can be done using negative constraints. For
example, to check whether the KD $\mathit{key}(r) = \{1\}$, stating
that the first attribute of the binary relation $r$ is a key
attribute, is satisfied by the database $D$, we just need to check
whether the database $D_{\neq}$ obtained by adding to $D$ the set of
atoms $\{\mathit{neq}(a,b)~|~a \neq b, \textrm{~and~} a,b
\textrm{~are~constants~occurring~in~} D\}$, where $\mathit{neq}$ is
an auxiliary predicate, satisfies the negative constraint
$r(X,Y),r(X,Z),\mathit{neq}(Y,Z) \ra \bot$. The atom
$\mathit{neq}(a,b)$ implies that $a$ and $b$ are different
constants.
Since, as already mentioned, checking NCs is tantamount to query
answering, we immediately get that the complexity of query answering
under non-conflicting sets of TGDs and KDs is the same as in the
case of TGDs only.

\vspace{-2mm}

Interestingly, by combining non-conflicting linear (or sticky) sets
of TGDs and KDs with NCs, we get strictly more expressive formalisms
than the most widely-adopted tractable ontology languages, in
particular DL-Lite$_{\A}$, DL-Lite$_{\F}$ and DL-Lite$_{\R}$,
without loosing FO-rewritability, and consequently high tractability
of query answering in data complexity. For more details, we refer
the interested reader to~\cite{CGL09,CGP10}.

\section{{\plusminus} for OBDA}
\label{sec:datalog-odba}

In this section we consider the problem of BCQ answering under the
FO-rewritable members of the {\plusminus} family, namely, linear,
sticky and sticky-join sets of TGDs. Given a BCQ $q$ and a set
$\dep$ of TGDs, the actual computation of the rewriting is done by
applying a backward-chaining resolution procedure using the rules of
$\dep$ as rewriting rules. Our algorithm optimizes the algorithm
presented in~\cite{WCGP10} by greatly reducing the number of BCQs in
the rewriting, and therefore improves the overall performance of
query answering. Before going into the details of the rewriting
algorithm, we first give some useful notions.

\vspace{-2mm}

A set of atoms $A = \{\atom{a}_1,\ldots,\atom{a}_n\}$, where $n
\geqslant 2$, \emph{unifies} if there exists a substitution
$\gamma$, called \emph{unifier} for $A$, such that
$\gamma(\atom{a}_1) = \ldots = \gamma(\atom{a}_n)$. A \emph{most
general unifier (MGU)} for $A$ is a unifier for $A$, denoted as
$\gamma_{A}$, such that for each other unifier $\gamma$ for $A$,
there exists a substitution $\gamma^{\prime}$ such that $\gamma =
\gamma^{\prime} \circ \gamma_{A}$. Notice that if a set of atoms
unify, then there exists a MGU. Furthermore, the MGU for a set of
atoms is unique (modulo variable renaming). The MGU for a singleton
set $\{\atom{a}\}$ is defined as the identity substitution on the
set of terms that occur in $\atom{a}$.

\vspace{-2mm}

Let us now give some auxiliary results which will allow us to
simplify our later technical definitions and proofs. The first such
lemma states that we can restrict our attention on TGDs that have
only one head-atom.

\begin{lemma}\label{lem:one-head-atom}
BCQ answering under (general) TGDs and BCQ answering under TGDs with
just one head-atom are \textsc{logspace}-equivalent problems.
\end{lemma}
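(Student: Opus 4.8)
The plan is to prove the two directions of the \textsc{logspace}-equivalence separately. One direction is immediate: a TGD with a single head-atom is a special case of a general TGD, so every instance $\tup{D,\dep,q}$ of BCQ answering with single-head TGDs is literally an instance of BCQ answering with general TGDs. The identity map is the required (trivially \textsc{logspace}) reduction.

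For the non-trivial direction, I would exhibit a \textsc{logspace} reduction that, given a general set $\dep$ of TGDs, produces an equi-answering set $\dep'$ in which every TGD has exactly one head-atom, while leaving both the database $D$ and the query $q$ untouched. The construction decomposes each TGD
\[
\sigma : \phi(\insX,\insY) \ra \exists \insZ\; \psi_1(\insX,\insZ) \wedge \cdots \wedge \psi_m(\insX,\insZ)
\]
by introducing a fresh auxiliary predicate $\mathit{aux}_\sigma$ whose arity equals the number of frontier variables $\insX$ (those shared by body and head) plus the number of existential variables $\insZ$. I would replace $\sigma$ by the single ``packing'' TGD $\phi(\insX,\insY) \ra \exists \insZ\; \mathit{aux}_\sigma(\insX,\insZ)$ together with the $m$ ``unpacking'' full TGDs $\mathit{aux}_\sigma(\insX,\insZ) \ra \psi_i(\insX,\insZ)$, for $i \in [m]$. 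Each of these has a single head-atom, and the whole rewriting is clearly computable in \textsc{logspace}: each TGD of $\dep$ is processed independently and yields a bounded, locally describable block of new TGDs.

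The correctness argument would go through the chase and the fact, recalled in Section~\ref{subsec:tgd-chase}, that $D \cup \dep \models q$ iff $\chase{D}{\dep} \models q$. I would show that the set of atoms of $\chase{D}{\dep'}$ over the original schema $\R$ coincides, up to a renaming of labeled nulls, with $\chase{D}{\dep}$; the auxiliary atoms are the only atoms of $\chase{D}{\dep'}$ mentioning the fresh predicates and thus are invisible to $q$. One direction of the simulation is that whenever $\sigma$ fires in $\chase{D}{\dep}$ via a homomorphism $h$, firing the packing TGD produces $\mathit{aux}_\sigma(h(\insX),\mathbf{z})$ for fresh nulls $\mathbf{z}$, after which the exhaustiveness of the chase forces all $m$ unpacking TGDs to fire, yielding exactly the atoms $\psi_i(h(\insX),\mathbf{z})$ that $\sigma$ itself would have produced. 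Conversely, every $\R$-atom created in $\chase{D}{\dep'}$ arises from an unpacking TGD applied to some $\mathit{aux}_\sigma$-atom, which in turn was created by the corresponding packing TGD, so it mirrors a single application of $\sigma$. Since both chases are universal models, the matching $\R$-atoms give $\chase{D}{\dep} \models q$ iff $\chase{D}{\dep'} \models q$, and hence $D \cup \dep \models q$ iff $D \cup \dep' \models q$.

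The main obstacle I anticipate is making the chase-simulation argument precise, in particular ensuring that the existential variables $\insZ$ are correctly \emph{shared} across the decomposed head-atoms: the auxiliary predicate must carry all of $\insX$ and $\insZ$ so that the single fresh-null choice made by the packing step is propagated identically to every $\psi_i$, faithfully reproducing the joint witness that a single application of $\sigma$ would introduce. Reconciling the breadth-first bookkeeping of the chase and the ``if not already present'' clause only up to homomorphic equivalence of the two universal models, which is all that BCQ answering requires, is the remaining technical care.
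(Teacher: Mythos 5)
Your construction is exactly the one used in the paper: each multi-head TGD is replaced by a ``packing'' TGD introducing an auxiliary atom $r_\sigma(\insX)$ over all head variables (frontier plus existential) followed by one full ``unpacking'' TGD per original head-atom, and correctness is argued by observing that the two chases agree on all atoms over the original schema while the auxiliary predicates are invisible to $q$. Your write-up is in fact somewhat more explicit than the paper's about the two directions of the chase simulation and about why the auxiliary predicate must carry the existential variables, but the approach is the same.
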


\begin{proof}
It suffices to show that BCQ answering under (general) TGDs can be
reduced in \textsc{logspace} to BCQ answering under TGDs with just
one head-atom.
Consider a BCQ $q$ over a schema $\R$, a database $D$ for $\R$, and
a set $\dep$ of TGDs over $\R$. We construct $\dep'$ from $\dep$ by
applying the following procedure. For each TGD $\sigma \in \dep$,
where $\head{\sigma} = \{\atom{a}_1,\ldots,\atom{a}_k\}$ and $\insX$
is the set of variables that occur in $\head{\sigma}$, replace
$\sigma$ with the following set of TGDs:
\[
\begin{array}{rcl}
\body{\sigma} &\ra& r_{\sigma}(\insX),\\
r_{\sigma}(\insX) &\ra& \atom{a}_1,\\
r_{\sigma}(\insX) &\ra& \atom{a}_2,\\
&\vdots&\\
r_{\sigma}(\insX) &\ra& \atom{a}_k,
\end{array}
\]
where $r_{\sigma}$ is an auxiliary predicate not occurring in $\R$
having the same arity as the number of variables in $\insX$. It is
not difficult to see that the above construction is feasible in
\textsc{logspace}.
By construction, except for the atoms with an auxiliary predicate,
$\chase{D}{\dep}$ and $\chase{D}{\dep'}$ coincide. The auxiliary
predicates, being introduced only during the above transformation,
do not match any predicate symbol in $q$, and hence $\chase{D}{\dep}
\models q$ iff $\chase{D}{\dep'} \models q$, or, equivalently, $D
\cup \dep \models q$ iff $D \cup \dep' \models q'$. \qed
\end{proof}

The next lemma implies that we can restrict our attention on TGDs
that have only one existentially quantified variable which occurs
only once.

\begin{lemma}\label{lem:one-exist-variable}
BCQ answering under (general) TGDs and BCQ answering under TGDs with
at most one existentially quantified variable that occurs only once
are \textsc{logspace}-equivalent problems.
\end{lemma}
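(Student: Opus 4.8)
Since every TGD having at most one existentially quantified variable that occurs only once is in particular a general TGD, one direction of the equivalence is immediate; I would therefore concentrate on reducing BCQ answering under general TGDs to BCQ answering under the restricted class, in analogy with Lemma~\ref{lem:one-head-atom}. (By first invoking Lemma~\ref{lem:one-head-atom} I could even assume a single head-atom, but this is not needed below.)

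Given a set $\dep$ of TGDs over a schema $\R$, I would replace each $\sigma : \phi(\insX,\insY) \ra \exists \insZ\, \psi(\insX,\insZ)$, where $\insZ = Z_1,\ldots,Z_m$ with $m \geq 1$, by the following chain of TGDs built from fresh auxiliary predicates $p_0,\ldots,p_m$ not occurring in $\R$ and of the appropriate arities:
\[
\begin{array}{rcl}
\phi(\insX,\insY) &\ra& p_0(\insX,\insY),\\
p_0(\insX,\insY) &\ra& \exists Z_1\, p_1(\insX,\insY,Z_1),\\
&\vdots&\\
p_{m-1}(\insX,\insY,Z_1,\ldots,Z_{m-1}) &\ra& \exists Z_m\, p_m(\insX,\insY,Z_1,\ldots,Z_m),\\
p_m(\insX,\insY,Z_1,\ldots,Z_m) &\ra& \psi(\insX,\insZ).
\end{array}
\]
TGDs with $m=0$ already satisfy the condition and are left untouched. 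Each intermediate TGD introduces exactly one existential variable $Z_i$ occurring only once, namely in the head auxiliary atom, while the first and last TGDs are full. Crucially, any variable of $\insZ$ occurring several times in $\psi$ re-appears as a \emph{universally} quantified variable of the final full TGD $p_m(\ldots)\ra\psi(\insX,\insZ)$, and repetitions of universal variables are unconstrained; thus every TGD in the resulting $\dep'$ meets the required restriction. Carrying the whole body tuple $\insX,\insY$ forward, rather than only the frontier $\insX$, guarantees that the chain fires exactly once per application of $\sigma$.

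For correctness I would argue, as in Lemma~\ref{lem:one-head-atom}, that $\chase{D}{\dep}$ and $\chase{D}{\dep'}$ coincide up to renaming of nulls once the atoms over $p_0,\ldots,p_m$ are discarded. A single application of $\sigma$ with homomorphism $h$, creating fresh nulls $z_1,\ldots,z_m$ and the atoms $h'(\psi(\insX,\insZ))$, is simulated step-by-step by the chain, which deposits $p_0(h(\insX,\insY))$, then successively the auxiliary atoms carrying the same fresh nulls, and finally the identical atoms of $h'(\psi(\insX,\insZ))$; conversely, every atom over $\R$ produced under $\dep'$ originates from such a chain, hence from a genuine application of $\sigma$. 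Since the auxiliary predicates appear in no BCQ $q$ over $\R$, we may take $q'=q$, and the fact that $D \cup \dep \models q$ iff $\chase{D}{\dep} \models q$ (and likewise for $\dep'$) yields $D \cup \dep \models q$ iff $D \cup \dep' \models q$. The transformation is plainly computable in \textsc{logspace}, as emitting the chain for $\sigma$ only requires iterating over its existential variables using counters bounded by the size of $\sigma$.

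The main obstacle I anticipate is the correctness bookkeeping rather than the construction itself: one must make the lock-step correspondence between the two chase sequences precise, checking in particular that the auxiliary chain cannot fire in any unintended way (the $p_i$ are fresh and so are populated only by these TGDs) and that the nulls introduced one at a time under $\dep'$ match those created simultaneously by $\sigma$. Carrying all body variables forward is the device that keeps this correspondence a bijection; had I retained only the frontier $\insX$, the two chases would merely be homomorphically equivalent, with the $\dep'$-chase the more economical one, which still suffices for BCQ answering but complicates the exact-coincidence argument.
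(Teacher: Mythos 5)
Your proposal is correct and follows essentially the same route as the paper: a \textsc{logspace} chain of fresh auxiliary predicates introducing the existential variables one at a time, with correctness argued by showing the two chases agree on the non-auxiliary atoms. The only (harmless) differences are that the paper threads only the frontier variables $X_1,\ldots,X_n$ through the chain rather than the full body tuple, and folds your initial full rule into the first existential one.
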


\begin{proof}
It suffices to show that BCQ answering under (general) TGDs can be
reduced in \textsc{logspace} to BCQ answering under TGDs that have
at most one existentially quantified variable which occurs only
once.
Consider a BCQ $q$ over a schema $\R$, a database $D$ for $\R$, and
a set $\dep$ of TGDs over $\R$. We construct $\dep'$ from $\dep$ by
applying the following procedure. For each TGD $\sigma \in \dep$,
where $\{X_1,\ldots,X_n\}$, for $n \geqslant 1$, is the set of
variables that occur both in $\body{\sigma}$ and $\head{\sigma}$,
and $\{Z_1,\ldots,Z_m\}$, for $m
> 1$, is the set of the existentially quantified variables of
$\sigma$, replace $\sigma$ with the following set of TGDs:
\[
\begin{array}{rcl}
\body{\sigma} &\ra& \exists Z_1 \,
r_{\sigma}^{1}(X_1,\ldots,X_n,Z_1),\\
r_{\sigma}^{1}(X_1,\ldots,X_n,Z_1) &\ra& \exists Z_2 \,
r_{\sigma}^{2}(X_1,\ldots,X_n,Z_1,Z_2),\\
&\vdots&\\
r_{\sigma}^{m-1}(X_1,\ldots,X_n,Z_1,\ldots,Z_{m-1}) &\ra& \exists
Z_m \, r_{\sigma}^{m}(X_1,\ldots,X_n,Z_1,\ldots,Z_m),\\
r_{\sigma}^{m}(X_1,\ldots,X_n,Z_1,\ldots,Z_m) &\ra& \head{\sigma},
\end{array}
\]
where $r_{\sigma}^{i}$ is an auxiliary predicate of arity $n+i$, for
each $i \in [m]$. It is easy to see that the above procedure can be
carried out in \textsc{logspace}.
By construction, except for the atoms with an auxiliary predicate,
$\chase{D}{\dep}$ and $\chase{D}{\dep'}$ are the same (modulo
bijective variable renaming). The auxiliary predicates, being
introduced only during the above construction, do not match any
predicate symbol in $q$, and hence $\chase{D}{\dep} \models q$ iff
$\chase{D}{\dep'} \models q$, or, equivalently, $D \cup \dep \models
q$ iff $D \cup \dep' \models q$. \qed
\end{proof}

Since the transformations given above preserve the syntactic
condition of linear, sticky and sticky-join sets of TGDs, henceforth
we assume w.l.o.g. that every TGD has just one atom in its head
which contains only one existentially quantified variable that
occurs only once. In the rest of the paper, for notational
convenience, given a TGD $\sigma$, we denote by $\pi_{\sigma}$ the
position in $\head{\sigma}$ at which the existentially quantified
variable occurs.

\vspace{-2mm}

We now give the notion of \emph{applicability} of a TGD to a set of
body-atoms of a query. Let us assume w.l.o.g that the variables that
appear in the query, and those that appear in the TGD, constitute
two disjoint sets.
Given a BCQ $q$, a variable is called \emph{shared} in $q$ if it
occurs more than once in $\body{q}$. Notice that in the case of
(non-Boolean) CQs, a variable is shared in $q$ if it occurs more
than once in $q$ (considering also the head of $q$ and not just its
body).

\begin{definition}[Applicability]
Consider a BCQ $q$ over a schema $\R$, and a TGD $\sigma$ over $\R$.
Given a set of atoms $A \subseteq \body{q}$ that unifies, we say
that $\sigma$ is \emph{applicable} to $A$ if the following
conditions are satisfied: (i) the set $A \cup \{\head{\sigma}\}$
unifies, and (ii) for each $\atom{a} \in A$, if the term at position
$\pi$ in $\atom{a}$ is either a constant or a shared variable in
$q$, then $\pi \neq \pi_{\sigma}$. \label{def:applicability}
\end{definition}

Let us now introduce the notion of \emph{factorizability} which, as
we explain below, makes one of the main differences between our
algorithm and the one presented in~\cite{WCGP10}, due to which a
perfect rewriting with less BCQs is obtained.

\begin{definition}[Factorizability]
Consider a BCQ $q$ over a schema $\R$, and a TGD $\sigma$ over $\R$
which contains an existentially quantified variable. A set of atoms
$A \subseteq \body{q}$, where $|A| \geqslant 2$, that unifies is
\emph{factorizable} w.r.t.~$\sigma$ if there exists a variable $V$
that occurs in every atom of $S$ only at position $\pi_\sigma$, and
also $V$ does not occur in $\body{q} \setminus S$.
\label{def:factorisability}
\end{definition}

It is important to clarify that in the case of (non-Boolean) CQs,
the notion of factorizability is defined as above, except that the
variable $V$ does not occur in $(\{\head{\sigma}\} \cup
\body{\sigma}) \setminus S$.

\begin{example}[Factorization]
{\rm Consider the BCQs
\[
\begin{array}{rcl}
q_1 &:& q()\, \leftarrow\, \underbrace{t(A,B,C),t(A,E,C)}_{S_1}\\
q_2 &:& q()\, \leftarrow\, s(C),\underbrace{t(A,B,C),t(A,E,C)}_{S_2}\\
q_3 &:& q()\, \leftarrow\, \underbrace{t(A,B,C),t(A,C,C)}_{S_3}
\end{array}
\]
and the TGD $\sigma : s(X),r(X,Y)\, \rightarrow\, \exists Z \,
t(X,Y,Z).$ Clearly, $S_1$ is factorizable w.r.t.~$\sigma$ since the
substitution $\{E \ra B\}$ is a unifier for $S_1$, and also $C$
appears in both atoms of $S_1$ only at position $\pi_{\sigma}$. The
factorization results in the query $q() \leftarrow t(A,B,C)$; notice
that $\sigma$ is not applicable to $S_1$, but it is applicable to
$\{t(A,B,C)\}$.
On the contrary, despite the fact that $S_2$ unifies, it is not
factorizable w.r.t.~$\sigma$ since $C$ occurs also in $\body{q_2}
\setminus S_2$.
Finally, even if $S_3$ unifies, it is not factorizable
w.r.t.~$\sigma$ since $C$ appears in $S_3$, not only at position
$\pi_{\sigma}$, but also at position $t[2]$.}
\label{exm:factorisation}
\end{example}

We are now ready to describe the algorithm {\sf TGD-rewrite},
depicted in Algorithm~\ref{alg:tgd-rewrite}, which is based on the
rewriting algorithm presented in~\cite{WCGP10}. The perfect
rewriting of a BCQ $q$ w.r.t. a set of TGDs $\Sigma$ is computed by
exhaustively applying (i.e., until a fixpoint is reached) two steps:
\emph{factorization} and \emph{rewriting}.

\begin{algorithm}[hbt]
\caption{The algorithm {\sf TGD-rewrite} \label{alg:tgd-rewrite}}
\small
    \KwIn{a BCQ $q$ over a schema $\R$, a set $\Sigma$ of TGDs over $\R$}
    \KwOut{the FO-rewriting $Q_{\textsc{fin}}$ of $q$ w.r.t. $\Sigma$}
    $Q_{\textsc{rew}} := \{\langle q,1 \rangle\}$\;
    \Repeat{$Q_{\textsc{temp}} = Q_{\textsc{rew}}$}{
        $Q_{\textsc{temp}} := Q_{\textsc{rew}}$\;
        \ForEach{$\{\langle q,x \rangle\} \in Q_{\textsc{temp}}$, {\em where} $x \in \{0,1\}$,}{
            \tcc{\textrm{factorization step}}
            \ForEach{$\sigma \in \Sigma$}{
                $q^{\prime} := \mathit{factorize}(q,\sigma)$\;
                \If{$\mathit{notExists}(\langle q^{\prime},y \rangle, Q_{\textsc{rew}})$, {\em where} $y \in \{0,1\}$,}{
                    $Q_{\textsc{rew}} := Q_{\textsc{rew}} \cup \{\langle q^{\prime}, 0 \rangle\}$\;
                }
            }
            \tcc{\textrm{rewriting step}}
            \ForEach{$A \subseteq \body{q}$}{
                \ForEach{$\sigma \in \Sigma$}{
                    \If{$\mathit{isApplicable}(\sigma,A,q)$}{
                        $q^{\prime} := \gamma_{A \cup \{\head{\sigma}\}}(q[A/\body{\sigma}])$\;
                        \If{$\mathit{notExists}(\langle q^{\prime},1 \rangle, Q_{\textsc{rew}})$}{
                            $Q_{\textsc{rew}} := Q_{\textsc{rew}} \cup \{\langle q^{\prime}, 1 \rangle\}$\;
                        }
                    }
                }
            }
        }
    }
    $Q_{\textsc{fin}} := \{q ~|~ \langle q,x \rangle \in Q_{\textsc{rew}} \textrm{~and~} x=1\}$\;
    \Return{$Q_{\textsc{fin}}$}
\end{algorithm}

\vspace{-2mm}

\textsc{Factorization Step.} The function
$\mathit{factorize}(q,\sigma)$, providing that there exists a subset
of $\body{q}$ which is factorizable w.r.t.~$\sigma$ (otherwise, the
query $q$ is returned), first selects such a set $S \subseteq
\body{q}$. Then, the query $q'$ is constructed by applying the MGU
$\gamma_{S}$ for $S$ on $q$.
Providing that there is no pair $\langle q^{\prime\prime},y
\rangle$, where $y \in \{0,1\}$, in $Q_{\textsc{rew}}$ such that
$q^{\prime}$ and $q^{\prime\prime}$ are the same (modulo bijective
variable renaming), the pair $\langle q^{\prime},0 \rangle$ is added
to $Q_{\textsc{rew}}$; the label $0$ keeps track of the queries
generated by the factorization step that must be excluded from the
final rewriting. This is carried out by the $\mathit{notExists}$
function.

\vspace{-2mm}

\textsc{Rewriting Step.} If there exists a pair $\langle q,
y\rangle$ and a TGD $\sigma \in \Sigma$ which is applicable to a set
of atoms $A \subseteq \body{q}$, then the algorithm constructs a new
query $q' = \gamma_{A \cup \{\head{\sigma}\}}(q[A /
\body{\sigma}])$, that is, the BCQ obtained from $q$ by replacing
$A$ with $\body{\sigma}$ and then applying the MGU for the set $A
\cup \{\head{\sigma}\}$. Providing that there is no pair $\langle
q^{\prime\prime},1 \rangle$ in $Q_{\textsc{rew}}$ such that
$q^{\prime}$ and $q^{\prime\prime}$ are the same (modulo bijective
variable renaming), the pair $\langle q^{\prime},1 \rangle$ is added
to $Q_{\textsc{rew}}$; the label $1$ keeps track of the queries
generated by the rewriting step which will be the final rewriting.

\begin{example}[Rewriting]
{\rm Consider the set $\Sigma$ of TGDs
\[
\begin{array}{rcl}
\sigma_1 &:& s(X) \, \rightarrow \exists Z\ \, t(X,X,Z)\\
\sigma_2 &:& t(X,Y,Z)\, \rightarrow\, r(Y,Z)
\end{array}
\]
and the query $q() \leftarrow t(A,B,C),r(B,C).$ {\sf TGD-rewrite}
first applies $\sigma_2$ to $\{r(B,C)\}$ since $\sigma_1$ is not
applicable. The query $q_1 : q() \leftarrow t(A,B,C), t(V^{1},B,C)$
is produced. Clearly, $\body{q_1}$ is factorizable w.r.t.~$\sigma_1$
and the query $q_2 : q() \leftarrow t(A,B,C)$ is obtained.
Now, $\sigma_1$ is applicable to $\{t(A,B,C)\}$ and the query $q_3 :
q() \leftarrow s(A)$ is obtained. The perfect rewriting constructed
by the algorithm is the set $\{q,q_1,q_3\}$.}
\label{exm:rewriting}
\end{example}

The next example shows that dropping the applicability condition,
then \textsf{TGD-rewrite} may produce unsound rewritings.

\begin{example}[Loss of soundness]
{\rm Suppose that we ignore the applicability condition during the
rewriting process. Consider the set $\Sigma$ of TGDs given in
Example~\ref{exm:rewriting}, and also the BCQ $q_1 : q() \leftarrow
t(A,B,c)$, where $c$ is a constant of $\Delta_c$.
A BCQ $q^{\prime}$ of the form  $q() \leftarrow s(V)$ is obtained,
where the information about the constant $c$ is lost. Consider now
the database $D = \{s(b),t(a,b,d)\}$ for $\R$. The query
$q^{\prime}$ maps to the atom $s(b)$ which implies that $D \models
q'$. However, the original query $q$ does not map to
$\chase{D}{\Sigma}$, and thus $D \cup \dep \not\models q$.
Therefore, any rewriting containing $q^{\prime}$ is not a sound
rewriting of $q$ given $\Sigma$.
Consider now the query $q'' : q() \leftarrow t(A,B,B)$. The same
query $q^{\prime}$ mapping to the atom $s(b)$ of $D$ is obtained.
However, during the construction of $\chase{D}{\dep}$ it is not
possible to get an atom of the form $t(X,Y,Y)$, where at positions
$t[2]$ and $t[3]$ the same value occurs. This implies that there is
no homomorphism that maps $q$ to $\chase{D}{\dep}$, and hence $D
\cup \dep \not\models q$. Therefore, any rewriting containing $q'$
is again unsound.} \label{exm:loss-soundness}
\end{example}

The applicability condition may prevent the generation of queries
that are vital to guarantee completeness of the rewritten query, as
shown by the following example. This is exactly the reason why the
factorization step is also needed.

\begin{example}[Loss of completeness]
{\rm Consider the set $\dep$ of TGDs
\[
\begin{array}{rcl}
\sigma_1 &:& p(X)\, \rightarrow\, \exists Y \, t(X,Y)\\
\sigma_2 &:& t(X,Y)\, \rightarrow\, s(Y)
\end{array}
\]
and the query $q : q() \leftarrow t(A,B),s(B).$ The only viable
strategy in this case is to apply $\sigma_2$ to $\{s(B)\}$, since
$\sigma_1$ is not applicable to $\{t(A,B)\}$ due to the shared
variable $B$. The query that we obtain is $q^{\prime} : q()
\leftarrow t(A,B), t(V^{1},B)$, where $V^{1}$ is a fresh variable.
Notice that in $q'$ the variable $B$ remains shared thus it is not
possible to apply $\sigma_1$. It is obvious that without the
factorization step there is no way to obtain the query
$q^{\prime\prime} : q() \leftarrow p(A)$ during the rewriting
process.
Now, consider the database $D = \{p(a)\}$. Clearly, $\chase{D}{\dep}
= \{p(a),t(a,z_1),s(z_1)\}$, and therefore $\chase{D}{\dep} \models
q$, or, equivalently, $D \cup \dep \models q$. However, the
rewritten query is not entailed by the given database $D$, since
$q''$ does not belong to it, which implies that it is not complete.}
\label{exm:loss-completeness}
\end{example}

We proceed now to establish soundness and completeness of the
proposed algorithm. Towards this aim we need two auxiliary technical
lemmas. The first one, which is needed for soundness, states that
once the chase entails the rewritten query constructed by the
rewriting algorithm, then the chase entails also the given query. In
the sequel, for brevity, given a BCQ $q$ over a schema $\R$ and a
set $\dep$ of TGDs over $\R$, we denote by $q_{\dep}$ the rewritten
query $\textsf{TGD-rewrite}(q,\dep)$.

\begin{lemma}\label{lem:sound-auxiliary-lemma}
Consider a BCQ $q$ over a schema $\R$, a database $D$ for $\R$, and
a set $\dep$ of TGDs over $\R$. If $\chase{D}{\dep} \models
q_{\dep}$, then $\chase{D}{\dep} \models q$.
\end{lemma}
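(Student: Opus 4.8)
The plan is to exploit the fact that $q_{\dep}$ is a union of conjunctive queries, each of which is obtained from $q$ by a finite sequence of the two elementary operations performed by \textsf{TGD-rewrite}, namely factorization and rewriting. Since $\chase{D}{\dep} \models q_{\dep}$ means that $\chase{D}{\dep} \models q^{*}$ for at least one disjunct $q^{*}$ of $q_{\dep}$, and since every such $q^{*}$ is reachable from $q$ by a chain $q = q_{0}, q_{1}, \ldots, q_{k} = q^{*}$ in which each $q_{i+1}$ arises from $q_{i}$ by a single factorization or rewriting step, it suffices to prove the claim by induction on $k$. The base case $q^{*} = q$ is trivial, so the whole argument reduces to a one-step statement: if $q'$ is produced from $q_{0}$ by a single step and $\chase{D}{\dep} \models q'$, then $\chase{D}{\dep} \models q_{0}$.

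The factorization case is immediate. Here $q' = \gamma_{S}(q_{0})$ for some factorizable $S \subseteq \body{q_{0}}$, so $\body{q'} = \gamma_{S}(\body{q_{0}})$. A homomorphism $h$ witnessing $\chase{D}{\dep} \models q'$ then yields the composite $h \circ \gamma_{S}$, which maps $\body{q_{0}}$ into $\chase{D}{\dep}$; hence $\chase{D}{\dep} \models q_{0}$. The rewriting case is where the real work lies. Suppose $q'$ is obtained using a TGD $\sigma$ applicable to $A \subseteq \body{q_{0}}$, i.e.\ $q' = \gamma(q_{0}[A/\body{\sigma}])$ with $\gamma = \gamma_{A \cup \{\head{\sigma}\}}$, and fix a homomorphism $h$ with $h(\body{q'}) \subseteq \chase{D}{\dep}$. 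By Lemmas~\ref{lem:one-head-atom} and~\ref{lem:one-exist-variable} we may assume $\head{\sigma}$ is a single atom with its unique existential variable $Z$ at position $\pi_{\sigma}$. Since $h(\gamma(\body{\sigma})) \subseteq \chase{D}{\dep}$, the map $f = h \circ \gamma$ sends $\body{\sigma}$ into $\chase{D}{\dep}$, and because $\chase{D}{\dep}$ is a model of $\dep$ it satisfies $\sigma$; thus there is an extension $f'$ of $f$ with $f'(\head{\sigma}) \in \chase{D}{\dep}$. I would then build a homomorphism $g : \body{q_{0}} \to \chase{D}{\dep}$ by mapping every atom of $\body{q_{0}} \setminus A$ via $h \circ \gamma$ and every atom of $A$ onto the head image $f'(\head{\sigma})$, setting $g$ on a term occurring at a position $\pi \neq \pi_{\sigma}$ of an atom of $A$ to $f'$ of the corresponding frontier term (consistent because $\gamma$ unifies $A$ with $\head{\sigma}$), and setting $g$ on a term occurring at position $\pi_{\sigma}$ to the value $f'(Z)$.

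The main obstacle is to verify that this $g$ is \emph{well defined} and is a genuine homomorphism, and this is exactly the point at which the applicability condition is indispensable (the loss-of-soundness example shows the claim fails without it). Two checks are delicate. First, a term sitting at the existential position $\pi_{\sigma}$ is mapped to the value $f'(Z)$, which is in general a fresh labeled null; condition (ii) of applicability guarantees that no constant and no shared variable of $q_{0}$ occurs at $\pi_{\sigma}$ in any atom of $A$, so this assignment neither forces the null to equal a constant nor conflicts with an occurrence of the same variable elsewhere in $\body{q_{0}}$. Second, for a variable shared between $A$ and $\body{q_{0}} \setminus A$ --- which by the same condition can only occur in $A$ at positions $\pi \neq \pi_{\sigma}$ --- the value $f'$ assigns through the head atom coincides with $h(\gamma(\cdot))$ because $\gamma$ identifies that variable with the matching frontier term, so the two definitions of $g$ agree. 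Once these consistency checks are carried out, $g$ maps each atom of $A$ onto $f'(\head{\sigma}) \in \chase{D}{\dep}$ and each remaining atom into $\chase{D}{\dep}$ via $h \circ \gamma$, witnessing $\chase{D}{\dep} \models q_{0}$ and completing the induction.
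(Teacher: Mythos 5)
Your proposal is correct and follows essentially the same route as the paper's proof: an induction on the number of rewriting steps whose core is the extension of the witnessing homomorphism through the existential position of $\head{\sigma}$, with the applicability condition used exactly where you use it (to rule out a constant or shared variable at $\pi_{\sigma}$, so the extension is well defined). The only cosmetic difference is that you treat the factorization step explicitly as a separate (trivial) case, whereas the paper's induction is phrased only over applications of the rewriting step.
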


\begin{proof}
The proof is by induction on the number of applications of the
rewriting step. We denote by $q_{\dep}^{[i]}$ the part of the
rewritten query $q_{\dep}$ obtained by applying $i$ times the
rewriting step.

\vspace{-2mm}

\textsc{Base Step.} Clearly, $q_{\dep}^{0} = q_{\dep}$, and the
claim holds trivially.

\vspace{-2mm}

\textsc{Inductive Step.} Suppose now that $\chase{D}{\dep} \models
q_{\dep}^{[i]}$, for $i \geq 0$. This implies that there exists $p
\in q_{\dep}^{[i]}$ such that $\chase{D}{\dep} \models p$, and thus
there exists a homomorphism $h$ such that $h(\body{p}) \subseteq
\chase{D}{\dep}$. If $p \in q_{\dep}^{[i-1]}$, then the claim
follows by induction hypothesis. The interesting case is when $p$
was obtained during the $i$-th application of the rewriting step
from a BCQ $p' \in q_{\dep}^{[i-1]}$, i.e., $q_{\dep}^{[i]} =
q_{\dep}^{[i-1]} \cup \{p\}$. By induction hypothesis, it suffices
to show that $\chase{D}{\dep} \models q_{\dep}^{[i-1]}$.

\vspace{-2mm}

Clearly, there exists a TGD $\sigma \in \dep$ of the form
$\phi(\insX,\insY) \ra \exists Z \, r(\insX,Z)$ which is applicable
to a set $A \subseteq \body{p'}$, and $p$ is such that $\body{p} =
\gamma(p'[A / \phi(\insX,\insY)])$, where $\gamma$ is the MGU for
the set $A \cup \{\head{\sigma}\}$. Observe that
$h(\gamma(\phi(\insX,\insY))) \subseteq \chase{D}{\dep}$, and hence
$\sigma$ is applicable to $\chase{D}{\dep}$; let $\mu = h \circ
\gamma$. Thus, $\mu'(r(\insX,Z)) \in \chase{D}{\dep}$, where $\mu'
\supset \mu$. We define the substitution $h' = h \cup \{\gamma(Z)
\ra \mu'(Z)\}$.

\vspace{-2mm}

Let us first show that $h'$ is a well-defined substitution. It
suffices to show that $\gamma(Z)$ is not a constant, and also that
$\gamma(Z)$ does not appear in the left-hand side of an assertion of
$h$. Towards a contradiction, suppose that $\gamma(Z)$ is either a
constant or appears in the left-hand side of an assertion of $h$. It
is easy to verify that in this case there exists an atom $\atom{a}
\in A$ such that at position $\pi_{\sigma}$ in $\atom{a}$ occurs
either a constant or a variable which is shared in $p'$. But this
contradicts the fact that $\sigma$ is applicable to $A$.
Consequently, $h'$ is well-defined.
It remains to show that the substitution $h' \circ \gamma$ maps
$\body{p'}$ to $\chase{D}{\dep}$, and thus $\chase{D}{\dep} \models
q_{\dep}^{[i-1]}$. Clearly, $\gamma(\body{p'} \setminus A) \subseteq
\body{p}$. Since $h(\body{p}) \subseteq \chase{D}{\dep}$, we get
that $h'(\gamma(\body{p'} \setminus A)) \subseteq \chase{D}{\dep}$.
Moreover,
\[
\begin{array}{rcl}
h'(\gamma(A)) &=& h'(\gamma(r(\insX,Z)))\\
&=& r(h'(\gamma(\insX)),h'(\gamma(Z)))\\
&=& r(\mu(\insX),\mu'(Z))\\
&=& \mu'(r(\insX,Z))\\
&\in& \chase{D}{\dep}.
\end{array}
\]
The proof is now complete. \qed
\end{proof}

The second auxiliary lemma, which is needed for completeness,
asserts that once the chase entails the rewritten query constructed
by the rewriting algorithm, then the given database also entails the
rewritten query.

\begin{lemma}\label{lem:complete-auxiliary-lemma}
Consider a BCQ $q$ over a schema $\R$, a database $D$ for $\R$, and
a set $\dep$ of TGDs over $\R$. If $\chase{D}{\dep} \models
q_{\dep}$, then $D \models q_{\dep}$.
\end{lemma}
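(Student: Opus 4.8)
The plan is to show that the union $q_\dep$ is \emph{closed} under the chase, in the sense that any homomorphism from a disjunct of $q_\dep$ into $\chase{D}{\dep}$ can be ``pulled back'', through the factorization and rewriting steps, to one whose image lies in $D$. Concretely, assume $\chase{D}{\dep} \models q_\dep$, so there is a CQ $p \in q_\dep$ and a homomorphism $h$ with $h(\body{p}) \subseteq \chase{D}{\dep}$. I would assign to every atom of $\chase{D}{\dep}$ its \emph{level} (atoms of $D$ have level $0$, and an atom produced by a chase step has level one greater than the atoms that triggered it), and associate with $\tup{p,h}$ the finite multiset of levels of the atoms in $h(\body{p})$. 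The argument proceeds by well-founded (multiset) induction on this measure, showing that some disjunct of $q_\dep$ maps into $D$, whence $D \models q_\dep$. If every atom of $h(\body{p})$ has level $0$, then $h(\body{p}) \subseteq D$, so $D \models p$ and hence $D \models q_\dep$.

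For the inductive step, let $\atom{a}^{*}$ be an atom of $h(\body{p})$ of maximal level $k>0$, created by applying a TGD $\sigma : \phi(\insX,\insY) \ra \exists Z\, r(\insX,Z)$ with trigger $\mu$, so that $\atom{a}^{*} = \mu'(r(\insX,Z))$ with $\mu' \supseteq \mu$ and $z := \mu'(Z) \in \Delta_z$ a fresh null sitting at position $\pi_\sigma$; the trigger atoms $\mu(\phi(\insX,\insY))$ all have level $<k$. Put $S := \{\atom{b} \in \body{p} : z \text{ occurs in } h(\atom{b})\}$. By maximality of $k$ the null $z$ occurs in $h(\body{p})$ only inside $\atom{a}^{*}$, and there only at $\pi_\sigma$ (recall $\sigma$ has a single existential variable occurring once); hence every atom of $S$ maps to $\atom{a}^{*}$, carries a variable at $\pi_\sigma$, and that variable occurs nowhere in $\body{p}$ except at $\pi_\sigma$-positions inside $S$. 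In particular no atom outside $S$ maps to $\atom{a}^{*}$.

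I would then split according to the applicability condition of Definition~\ref{def:applicability}. If some variable sits at $\pi_\sigma$ in two distinct atoms of $S$, it is \emph{shared}, the atoms in which it occurs form a set that is factorizable w.r.t.~$\sigma$ (Definition~\ref{def:factorisability}), and since the algorithm reaches a fixpoint the factorized query $p'$ lies in $q_\dep$; as $h$ already sends these atoms to the single atom $\atom{a}^{*}$, the MGU $\gamma_S$ factors through $h$, giving a homomorphism of $p'$ with the \emph{same} image but strictly fewer atoms mapping to $\atom{a}^{*}$. Otherwise no $\pi_\sigma$-variable is shared, and $\sigma$ is applicable to $S$: condition (ii) holds since at $\pi_\sigma$ each atom of $S$ carries a non-shared variable, never a constant (its image $z$ is a null); condition (i) holds since $h \cup \mu'$ is a common unifier sending all of $S \cup \{\head{\sigma}\}$ to $\atom{a}^{*}$. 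The rewriting step then yields $p'' = \gamma(p[S/\body{\sigma}]) \in q_\dep$ with $\gamma = \gamma_{S \cup \{\head{\sigma}\}}$. Because $h \cup \mu'$ is a unifier of $S \cup \{\head{\sigma}\}$ and $\gamma$ is most general, there is $h''$ with $h'' \circ \gamma = h \cup \mu'$; this $h''$ maps $\body{p''}$ into $\chase{D}{\dep}$, since the atoms inherited from $p \setminus S$ land in $h(\body{p}\setminus S)$ and those coming from $\body{\sigma}$ land in $\mu(\phi(\insX,\insY))$, both subsets of $\chase{D}{\dep}$. Crucially, the image of $h''$ replaces the single level-$k$ atom $\atom{a}^{*}$ by trigger atoms of level $<k$, so its level-multiset is strictly smaller and the induction hypothesis applies.

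The main obstacle is exactly this dichotomy: one must check that the factorization/applicability split precisely covers the ways in which the freshly-introduced null $z$ can be ``witnessed'' by several body atoms, and that factorization steps (which leave the level-multiset unchanged) cannot be iterated forever before a genuine rewriting step lowers it. Refining the measure lexicographically by the number of variables that occur at $\pi_\sigma$-positions in more than one atom of $\body{p}$ makes each factorization strictly decrease the refinement, so the factorization phase terminates; the MGU-factoring and the level bookkeeping in the rewriting case are then routine.
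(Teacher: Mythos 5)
Your proposal is correct and follows essentially the same route as the paper's proof: both single out the ``deepest'' chase atom $\atom{a}^{*}$ in the image of $h$, use the freshness of the null $z$ at $\pi_{\sigma}$ to show that the query atoms mapping to $\atom{a}^{*}$ split into factorizable groups indexed by their $\pi_{\sigma}$-variables, factorize them, verify applicability, and then use the MGU property to pull the homomorphism back to a one-step-earlier portion of the chase. The only differences are bookkeeping (a well-founded multiset-of-levels measure with a lexicographic refinement, versus the paper's induction on the chase-step index with all factorizations batched into a single auxiliary claim), and both versions share the same harmless abuse of calling the intermediate factorized query a member of $q_{\dep}$ even though it carries label $0$.
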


\begin{proof}
We proceed by induction on the number of applications of the chase
step.

\vspace{-2mm}

\textsc{Base Step.} Clearly, $\apchase{0}{D}{\dep} = D$, and the
claim holds trivially.

\vspace{-2mm}

\textsc{Inductive Step.} Suppose now that $\apchase{i}{D}{\dep}
\models q_{\dep}$, for $i \geq 0$. This implies that there exists $p
\in q_{\dep}$ such that $\apchase{i}{D}{\dep} \models p$, and thus
there exists a homomorphism $h$ such that $h(\body{p}) \subseteq
\apchase{i}{D}{\dep}$. If $h(\body{p}) \subseteq
\apchase{i-1}{D}{\dep}$, then the claim follows by induction
hypothesis. The non-trivial case is when the atom $\atom{a}$,
obtained during the $i$-th application of the chase step due to a
TGD $\sigma \in \dep$ of the form $\phi(\insX,\insY) \ra \exists Z
\, r(\insX,Z)$, belongs to $h(\body{p})$. Clearly, there exists a
homomorphism $\mu$ such that $\mu(\phi(\insX,\insY)) \subseteq
\apchase{i-1}{D}{\dep}$ and $\atom{a} = \mu'(r(\insX,\insY))$, where
$\mu' \supseteq \mu$. By induction hypothesis, it suffices to show
that $\apchase{i-1}{D}{\dep} \models q_{\dep}$. Before we proceed
further, we need to establish an auxiliary technical claim.

\begin{claim}\label{cla:auxiliary-claim}
There exists a BCQ $p' \in q_{\dep}$ and a set of atoms $A \subseteq
\body{p'}$ such that $\sigma$ is applicable to $A$, and also there
exists a homomorphism $\lambda$ such that $\lambda(\body{p'}
\setminus A) \subseteq \apchase{i-1}{D}{\dep}$ and $\lambda(A) =
\atom{a}$.
\end{claim}

\begin{proof}
Clearly, there exists a set of atoms $B$ such that $h(\body{p}
\setminus B) \subseteq \apchase{i-1}{D}{\dep}$ and $h(B) =
\atom{a}$. Observe that the null value that occurs in $\atom{a}$ at
position $\pi_{\sigma}$ does not occur in $\apchase{i-1}{D}{\dep}$
or in $\atom{a}$ at some position other than $\pi_{\sigma}$.
Therefore, the variables that occur in the atoms of $B$ at
$\pi_{\sigma}$ do not appear at some other position. Consequently,
$B$ can be partitioned into the sets $B_1,\ldots,B_m$, where $m \geq
1$, and the following holds: for each $i \in [m]$, in the atoms of
$B_i$ at position $\pi_{\sigma}$ the same variable $V_i$ occurs, and
also $V_i$ does not occur in some other set $B \in
\{B_1,\ldots,B_m\} \setminus \{B_i\}$ or in $B_i$ at some position
other than $\pi_{\sigma}$. It is easy to verify that each set $B_i$
is factorizable w.r.t.~$\sigma$.

\vspace{-2mm}

Suppose that we factorize $B_1$. Then, the query $p_1 =
\gamma_1(p)$, where $\gamma_1$ is the MGU for $B_1$, is obtained.
Observe that $h$ is a unifier for $B_1$. By definition of the MGU,
there exists a substitution $\theta_1$ such that $h = \theta_1 \circ
\gamma_1$. Clearly,
\[
\begin{array}{rcl}
\theta_1(\body{p_1} \setminus \gamma_1(B)) &=&
\theta_1(\gamma_1(\body{p}) \setminus \gamma_1(B))\\
&=& h(\body{p} \setminus B)\\
&\subseteq& \apchase{i-1}{D}{\dep},
\end{array}
\]
and $\theta_1(\gamma_1(B)) = h(B) = \atom{a}$.

\vspace{-2mm}

Now, observe that the set $\gamma_1(B_2) \subseteq \body{p_1}$ is
factorizable w.r.t.~$\sigma$. By applying factorization we get the
query $p_2 = \gamma_2(p_1)$, where $\gamma_2$ is the MGU for
$\gamma_1(B_2)$. Since $\theta_1$ is a unifier for $\gamma_1(B_2)$,
there exists a substitution $\theta_2$ such that $\theta_1 =
\theta_2 \circ \gamma_2$. Clearly,
\[
\begin{array}{rcl}
\theta_2(\body{p_2} \setminus \gamma_2(\gamma_1(B))) &=&
\theta_2(\gamma_2(\body{p_1}) \setminus \gamma_2(\gamma_1(B)))\\
&=& \theta_1(\gamma_1(\body{p}) \setminus \gamma_1(B))\\
&=& h(\body{p} \setminus B)\\
&\subseteq& \apchase{i-1}{D}{\dep},
\end{array}
\]
and $\theta_2(\gamma_2(\gamma_1(B))) = \theta_1(\gamma_1(B)) = h(B)
= \atom{a}$.

\vspace{-2mm}

Eventually, by applying the factorization step as above, we will get
the BCQ
\[
p_m\ =\ \gamma_m \circ \ldots \circ \gamma_1(p),
\]
where $\gamma_j$ is the MGU for the set $\gamma_{j-1} \circ \ldots
\circ \gamma_1(B_j)$, for $j \in \{2,\ldots,m\}$ (recall that
$\gamma_1$ is the MGU for $B_1$), such that $\theta_m(\body{p_m}
\setminus \gamma_m \circ \ldots \circ \gamma_1(B)) \subseteq
\apchase{i-1}{D}{\dep}$ and $\theta_m(\gamma_m \circ \ldots \circ
\gamma_1(B)) = \atom{a}$.

\vspace{-2mm}

It is easy to verify that $\sigma$ is applicable to $A$. The claim
follows with $p' = p_m$, $A = \gamma_m \circ \ldots \circ
\gamma_1(B)$ and $\lambda = \theta_m$. \qed
\end{proof}

The above claim implies that during the rewriting process eventually
we will get a BCQ $p''$ such that $\body{p''} = \gamma(\body{p'}
\setminus A) \cup \gamma(\phi(\insX,\insY))$, where $\gamma$ is the
MGU for the set $A \cup \{\head{\sigma}\}$. It remains to show that
there exists a homomorphism that maps $\body{p''}$ to
$\apchase{i-1}{D}{\dep}$. Since $\lambda \cup \mu'$ is a
well-defined substitution, we get that $\lambda \cup \mu'$ is a
unifier for $A \cup \{\head{\sigma}\}$. By definition of the MGU,
there exists a substitution $\theta$ such that $\lambda \cup \mu' =
\theta \circ \gamma$. Observe that
\[
\begin{array}{rcl}
\theta(\body{p''}) &=& \theta(\gamma(\body{p'} \setminus A) \cup
\gamma(\phi(\insX,\insY)))\\
&=& (\lambda \cup \mu')(\body{p'} \setminus A) \cup (\lambda \cup
\mu')(\phi(\insX,\insY))\\
&=& \lambda(\body{p'} \setminus A) \cup \mu'(\phi(\insX,\insY))\\
&\subseteq& \apchase{i-1}{D}{\dep}.
\end{array}
\]
Consequently, the desired homomorphism is $\theta$, and the claim
follows. \qed
\end{proof}

We are now ready to establish soundness and completeness of the
algorithm \textsf{TGD-rewrite}.

\begin{theorem}\label{the:TGD-rewrite-sound-complete}
Consider a BCQ $q$ over a schema $\R$, a database $D$ for $\R$, and
a set $\dep$ of TGDs over $\R$. It holds that, $D \models q_{\dep}$
iff $D \cup \dep \models q$.
\end{theorem}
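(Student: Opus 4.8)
The plan is to assemble the statement directly from the two auxiliary lemmas together with the universal-model property of the chase recalled in the preliminaries, namely that $D \cup \dep \models q$ iff $\chase{D}{\dep} \models q$ for every BCQ $q$. The argument splits along the two directions of the biconditional and rests on two elementary observations about $q_{\dep}$: first, that $D \subseteq \chase{D}{\dep}$ by construction of the chase, so that evaluation of the positive query $q_{\dep}$ is monotone along this inclusion; and second, that the original query $q$ is itself one of the disjuncts of $q_{\dep}$, since the pair $\langle q,1\rangle$ is inserted at the initialisation of \textsf{TGD-rewrite} and is never removed, hence survives into $Q_{\textsc{fin}} = q_{\dep}$.

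For the \onlyifdirection direction (soundness), I would assume $D \models q_{\dep}$. As $D \subseteq \chase{D}{\dep}$ and $q_{\dep}$ is a union of conjunctive queries, monotonicity yields $\chase{D}{\dep} \models q_{\dep}$. Lemma~\ref{lem:sound-auxiliary-lemma} then gives $\chase{D}{\dep} \models q$, and the universal-model property converts this into $D \cup \dep \models q$, as required.

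For the \ifdirection direction (completeness), I would assume $D \cup \dep \models q$. The universal-model property gives $\chase{D}{\dep} \models q$, and since $q$ is a disjunct of $q_{\dep}$ we immediately obtain $\chase{D}{\dep} \models q_{\dep}$. Applying Lemma~\ref{lem:complete-auxiliary-lemma} then delivers $D \models q_{\dep}$, closing the equivalence.

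Essentially all of the genuine work has already been discharged inside the two auxiliary lemmas, so I expect no real obstacle at this level: the proof is a short gluing argument. The only points needing care are the two bookkeeping facts above — that the chase is a superset of $D$ (permitting the monotone passage from $D$ to $\chase{D}{\dep}$ in the soundness direction) and that $q$ literally occurs among the conjunctive queries of $q_{\dep}$ (so that entailment of $q$ by the chase lifts to entailment of the whole union in the completeness direction). Both follow immediately, the former from the definition of the chase and the latter from the initialisation step of the algorithm.
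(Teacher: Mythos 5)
Your proof is correct and follows essentially the same route as the paper's: both directions reduce to the chase via $D \subseteq \chase{D}{\dep}$ and the membership $q \in q_{\dep}$, and then invoke Lemmas~\ref{lem:sound-auxiliary-lemma} and~\ref{lem:complete-auxiliary-lemma} respectively, together with the universal-model property. If anything, your write-up is slightly more explicit than the paper's about where the universal-model property is used.
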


\begin{proof}
Suppose first that $D \models q_{\dep}$. Since $D \subseteq
\chase{D}{\dep}$, we get that $\chase{D}{\dep} \models q_{\dep}$,
and the claim follows by Lemma~\ref{lem:sound-auxiliary-lemma}.
Suppose now that $D \cup \dep \models q_{\dep}$. Since $q \in
q_{\dep}$, we get that $\chase{D}{\dep} \models q_{\dep}$, and the
claim follows by Lemma~\ref{lem:complete-auxiliary-lemma}. \qed
\end{proof}

Notice that the above result holds for arbitrary TGDs. However,
termination of \textsf{TGD-rewrite} is guaranteed if we consider
linear, sticky or sticky-join sets of TGDs since, during the
rewriting process, only finitely many queries (modulo bijective
variable renaming) are generated.

\begin{theorem}\label{the:TGD-rewrite-termination}
The algorithm \textsf{TGD-rewrite} terminates under linear, sticky
or sticky-join sets of TGDs.
\end{theorem}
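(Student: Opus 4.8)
The plan is to prove termination by establishing that only finitely many BCQs (modulo bijective variable renaming) can ever be generated during the execution of \textsf{TGD-rewrite}. Since the algorithm repeats until $Q_{\textsc{temp}} = Q_{\textsc{rew}}$, i.e.\ until no new query is added, it suffices to bound the total number of distinct queries that may appear in $Q_{\textsc{rew}}$. The key observation is that both the factorization step and the rewriting step are ``non-increasing'' in a suitable sense: factorization applies an MGU, which can only identify variables and never increases the number of atoms; the rewriting step replaces a set $A$ of unifying atoms by the single body-atom $\body{\sigma}$ of a linear TGD (recall we have reduced to one head-atom and, for linear TGDs, a single body-atom), so it too never increases the number of atoms in the query. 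Hence every generated query has at most $|\body{q}|$ atoms, where $q$ is the input query.

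First I would make precise the bound on the ``alphabet'' of the generated queries. The predicate symbols appearing in any generated query are drawn from those occurring in $q$ together with those occurring in the bodies of the TGDs in $\dep$; this is a fixed finite set determined by the input. Likewise the maximum arity is bounded by the largest arity occurring in $\R$. The only possible source of unboundedness is the introduction of fresh variables: the rewriting step replaces $A$ by $\body{\sigma}$, which contributes fresh variables (the $V^i$ of Example~\ref{exm:rewriting}). However, since the number of atoms per query is bounded by $n = |\body{q}|$ and each atom has arity at most the maximal arity $w$ of $\R$, every generated query contains at most $n \cdot w$ term-positions, and therefore at most $n \cdot w$ distinct variables. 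Up to bijective variable renaming, there are only finitely many queries with at most $n$ atoms over a finite set of predicates of bounded arity using at most $n \cdot w$ variables. Since $\mathit{notExists}$ ensures that no query equal (modulo renaming) to one already in $Q_{\textsc{rew}}$ is added, the set $Q_{\textsc{rew}}$ is a subset of this finite collection and can only grow; hence the fixpoint loop terminates.

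The main obstacle is justifying that the number of atoms never grows, and in particular verifying this for each of the three classes uniformly. For linear TGDs this is immediate, since $|\body{\sigma}| = 1$ and we replace a set $A$ (of size $\geqslant 1$) by a single atom, so $|\body{q'}| \leqslant |\body{q}|$. For sticky and sticky-join sets of TGDs, however, TGD bodies may contain several atoms, so the rewriting step can increase the number of atoms; the atom-count argument alone does not suffice. Here the crucial point is the stickiness (respectively sticky-join) property: I would invoke the fact, established in~\cite{CGP10,CaGP10}, that under a sticky or sticky-join set of TGDs the backward rewriting process introduces only boundedly many new variables, because the marking of variables guarantees that shared variables ``stick'' and therefore prevent the unbounded proliferation of fresh existential variables in the rewriting. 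Thus, although the atom-count is no longer monotone, the number of distinct variables (and hence of distinct queries modulo renaming) remains bounded by a function of $q$ and $\dep$.

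Consequently I would structure the argument as a case analysis: first the linear case via the clean atom-count bound, then the sticky and sticky-join cases by appealing to the boundedness of the number of generated variables guaranteed by the respective variable-marking conditions. In all three cases the conclusion is the same — only finitely many queries (modulo bijective variable renaming) are produced, so $Q_{\textsc{rew}}$ stabilizes and \textsf{TGD-rewrite} halts. I expect the delicate part to be the sticky-join case, where the marking procedure is more intricate than for plain stickiness, so some care is needed to quote precisely the boundedness guarantee from~\cite{CaGP10} rather than re-deriving it.
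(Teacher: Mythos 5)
Your proposal is correct and follows essentially the same route as the paper, which offers only the one-line justification that during the rewriting process only finitely many queries (modulo bijective variable renaming) are generated, so that the duplicate check in $\mathit{notExists}$ forces the fixpoint loop to stabilize. Your elaboration --- the atom-count/variable-count bound for linear TGDs and the appeal to the variable-marking boundedness results of~\cite{CGP10,CaGP10} for the sticky and sticky-join cases --- is a faithful and more detailed filling-in of exactly that argument.
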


Approaches such as those of~\cite{CDLLLR07} and~\cite{WCGP10} resort
to exhaustive factorizations of the atoms in the queries generated
by the rewriting algorithm. By factorizing a query $q$ we obtain a
subquery $q'$, that is, $q$ implies $q'$ (w.r.t. the given set of
TGDs). Observe that by computing the factorized query $q'$ we
eliminate unnecessary shared variables, in the body of $q$, due to
which the applicability condition is violated.
Consider for example the query $q^{\prime}$ of
Example~\ref{exm:loss-completeness}. By factorizing the body of $q'$
we obtain the query $q() \leftarrow t(A,B)$ which is a subquery
(w.r.t. to the given set $\dep$ of TGDs) of $q'$ (in this case
equivalent to $q'$), where the variable $B$ is no longer shared.
Thus, the rewriting step can now apply $\sigma_1$ to $\{t(A,B)\}$
and produce the query $q() \leftarrow p(A)$ which is needed to
ensure completeness.

\vspace{-2mm}

The exhaustive factorization produces a non-negligible number of
redundant queries as demonstrated by the simple example above. It is
thus necessary to apply a restricted form of factorization that
generates a possibly small number of BCQs that are necessary to
guarantee completeness of the rewritten query. This corresponds to
the identification of all the atoms in the query whose shared
existential variables come from the same atom in the chase, and they
can be thus unified with no loss of information. The key principle
behind our factorization process is that, in order to be applied,
there must exist a TGD that can be applied to the output of the
factorization.

\subsection{Exploiting Negative Constraints}\label{subsec:ncs}

It is well-known that negative constraints (NCs) of the form
$\forall \insX \, \phi(\insX) \ra \bot$ are vital for representing
ontologies. As already explained in
Subsection~\ref{subsec:additional-features}, given a database $D$
for a schema $\R$, a set $\dep$ of TGDs over $\R$, and a set
$\dep_\bot$ of NCs over $\R$, once the theory $D \cup \dep \cup
\dep_\bot$ is consistent, then we are allowed to ignore the NCs
since, for every BCQ $q$, $D \cup \dep \cup \dep_\bot \models q$ iff
$D \cup \dep \models q$. However, as shown in the following example,
by exploiting the given set of NCs it is possible to further reduce
the size of the final rewriting.

\begin{example}\label{exa:ncs}
Consider the TGD $\sigma : t(X),s(Y) \ra \exists Z \, p(Y,Z)$, the
NC $\nu : r(X,Y),s(Y) \ra \bot$, and the BCQ $q() \la
r(A,B),p(B,C)$. Clearly, due to the rewriting step, the query $p :
q() \la r(A,B),t(V^1),s(B)$ is obtained during the rewriting
process. However, this query is not really needed since, for any
database $D$ for $\R$, $D \not\models p$; otherwise, $D$ violates
the NC $\nu$ which is a contradiction since we always assume that
the theory $D \cup \{\sigma,\nu\}$ is consistent.
\end{example}

It is not difficult to show that, given a BCQ $q$, and a set $\dep$
of TGDs, if a query $p \in q_\dep$ is not entailed by
$\chase{D}{\dep}$, for an arbitrary database $D$, then any query $p'
\in q_\dep$ obtained during the rewriting process starting from $p$,
also it is not entailed by $\chase{D}{\dep}$. Assume now that the
set $\dep_\bot$ of NCs is part of the input. If we obtain a query $p
\in q_\dep$ such that there exists a homomorphism that maps
$\body{\nu}$, for some NC $\nu \in \dep_\bot$, to $\body{p}$, then
we can safely ignore $p$ since $\chase{D}{\dep}$ does not entail
$p$.

\vspace{-2mm}

From the above informal discussion, we conclude that we can further
reduce the size of the final rewriting by modifying our algorithm as
follows. During the execution of the rewriting algorithm
\textsf{TGD-rewrite} (see Algorithm~\ref{alg:tgd-rewrite}), after
the factorization step (resp., rewriting step) we check whether
there exists a homomorphism that maps $\body{\nu}$, for some NC
$\nu$ of the given set of NCs, to the body of the generated query
$q'$. If there exists such a homomorphism, then the pair
$\tup{q',0}$ (resp., $\tup{q',1}$) is not added to the set
$Q_{\textsc{rew}}$.
Furthermore, the pair $\tup{q,1}$ is added to $Q_{\textsc{rew}}$
(see the first line of the algorithm) only if there is no
homomorphism that maps $\body{\nu}$, for some NC $\nu$ of the given
set of NCs, to $\body{q}$. If there exists such a homomorphism, then
the algorithm terminates and returns the emptyset, which means that
$\chase{D}{\dep} \not\models q$, for every database $D$ for $\R$.

\section{Rewriting Optimization}
\label{sec:optimization}

It is common knowledge that the perfect rewriting obtained by
applying a backward-chaining rewriting algorithm (like {\sf
TGD-rewrite}) is, in general, not very well-suited for execution by
a DB engine due to the large number of queries to be evaluated. In
this section we propose a technique, called \emph{query
elimination}, aiming at optimizing the obtained rewritten query
under the class of linear TGDs. As we shall see, query elimination
(which is an additional step during the execution of the algorithm
{\sf TGD-rewrite}) reduces \emph{(i)} the number of BCQs of the
perfect rewriting, \emph{(ii)} the number of atoms in each query of
the rewriting as well as \emph{(iii)} the number of joins.
Note that in the rest of the paper we restrict our attention on
linear TGDs. Recall that linear TGDs are TGDs with just one atom in
their body. Since we also assume, as explained in the previous
section, TGDs with just one atom in their head, henceforth, when
using the term TGD, we shall refer to TGDs with just one body-atom
and one head-atom.

\vspace{-2mm}

By exploiting the given set of TGDs, it is possible to identify
atoms in the body of a certain query that are logically implied
(w.r.t. the given set of TGDs) by other atoms in the same query. In
particular, for each BCQ $q$ obtained by applying the rewriting step
of \textsf{TGD-rewrite}, the atoms of $\body{q}$ that are logically
implied (w.r.t. the given set of TGDs) by some other atom of
$\body{q}$ are eliminated.
Roughly speaking, the elimination of an atom from the body of a
query implies the avoidance of the construction of redundant queries
during the rewriting process. Thus, this step greatly reduces the
number of BCQs in the perfect rewriting.
Before going into the details, let us first introduce some necessary
technical notions.

\begin{definition}[Dependency Graph]
Consider a set $\dep$ of TGDs over a schema $\R$. The
\emph{dependency graph} of $\dep$ is a labeled directed multigraph
$\tup{N,E,\lambda}$, where $N$ is the node set, $E$ is the edge set,
and $\lambda$ is a labeling function $E \ra \dep$. The node set $N$
is the set of positions of $\R$. If there is a TGD $\sigma \in \dep$
such that the same variable appears at position $\pi_b$ in
$\body{\sigma}$ and at position $\pi_h$ in $\head{\sigma}$, then in
$E$ there is an edge $e = (\pi_b,\pi_h)$ with $\lambda(e) = \sigma$.
\label{def:updg}
\end{definition}

Intuitively speaking, the dependency graph of a set $\dep$ of TGDs
describes all the possible ways of propagating a term from a
position to some other position during the construction of the chase
under $\dep$. More precisely, the existence of a path $P$ from
$\pi_1$ to $\pi_2$ implies that it is possible (but not always) to
propagate a term from $\pi_1$ to $\pi_2$. The existence of $P$
guarantees the propagation of a term from $\pi_1$ to $\pi_2$ if, for
each pair of consecutive edges $e = (\pi,\pi')$ and $e' =
(\pi',\pi'')$ of $P$, where $e$ and $e'$ are labeled by the TGDs
$\sigma$ and $\sigma'$, respectively, the atom obtained during the
chase by applying $\sigma$ triggers $\sigma'$. To verify whether
this holds we need an additional piece of information, the so-called
\emph{equality type}, about the body-atom and the head-atom of each
TGD that occurs in $P$.

\begin{definition}[Equality Type]
Consider an atom $\atom{a}$ of the form $r(t_1,\ldots,t_n)$, where
$n \geq 1$. The \emph{equality type} of $\atom{a}$ is the set of
equalities
\begin{eqnarray*}
&\left\{r[i] = r[j]~|~t_i,t_j \not\in \Delta_c \textrm{~and~} t_i =
t_j\right\}&\\
&\bigcup&\\
&\left\{r[i] = c~|~c \in \Delta_c \textrm{~and~} t_i = c\right\}.&
\end{eqnarray*}
We denote the above set as $\mathit{eq}(\atom{a})$.
\label{def:equality-type}
\end{definition}

It is straightforward to see that, given a pair of TGDs $\sigma$ and
$\sigma'$, if $\mathit{eq}(\body{\sigma'}) \subseteq
\mathit{eq}(\head{\sigma})$, then there exists a substitution $\mu$
such that $\mu(\body{\sigma'}) = \head{\sigma}$. This allows us to
show that the atom obtained by applying $\sigma$ during the
construction of the chase triggers $\sigma'$.
Consequently, the existence of a path $P$ (as above) guarantees the
propagation of a term from $\pi_1$ to $\pi_2$ if, for each pair of
consecutive edges $e$ and $e'$ of $P$ which are labeled by $\sigma$
and $\sigma'$, respectively, $\mathit{eq}(\body{\sigma'}) \subseteq
\mathit{eq}(\head{\sigma})$.

\begin{example}[\it Dependency Graph]
{\rm Consider the set $\dep$ of TGDs
\[
\begin{array}{rcl}
\sigma_1 &:& p(X,Y) \rightarrow \exists Z r(X,Y,Z)\\
\sigma_2 &:& r(X,Y,c) \rightarrow s(X,Y,Y)\\
\sigma_3 &:& s(X,X,Y) \rightarrow p(X,Y).
\end{array}
\]
The equality type of the body-atoms and head-atoms of the TGDs of
$\dep$ are as follows:
\[
\begin{array}{rcl}
\mathit{eq}(\body{\sigma_1}) &=& \emptyset\\
\mathit{eq}(\head{\sigma_1}) &=& \emptyset\\
\mathit{eq}(\body{\sigma_2}) &=& \{r[3]=c\}\\
\mathit{eq}(\head{\sigma_2}) &=& \{s[2]=s[3]\}\\
\mathit{eq}(\body{\sigma_3}) &=& \{s[1]=s[2]\}\\
\mathit{eq}(\head{\sigma_3}) &=& \emptyset.
\end{array}
\]
The dependency graph of $\dep$ is shown in Figure~\ref{fig:updg}.}
\label{exm:updg}
\end{example}

We are now ready, by exploiting the dependency graph of a set of
TGDs, and the equality type of an atom, to introduce \emph{atom
coverage}.

\begin{figure}[t]
 \epsfclipon
  \centerline
  {\hbox{
  \epsfxsize=3.5cm
  \epsfysize=3.5cm
  \leavevmode
  \epsffile{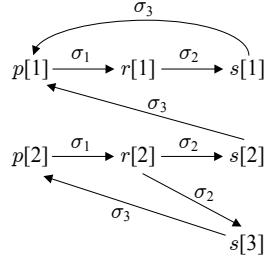}
  }} \epsfclipoff \caption{Dependency graph for Example~\ref{exm:updg}.}
  \label{fig:updg}
\end{figure}

\begin{definition}[Atom Coverage]
Consider a BCQ $q$ over a schema $\R$, and a set $\dep$ of TGDs over
$\R$. Let $\atom{a}$ and $\atom{b}$ be atoms of $\body{q}$, where
$\{t_1,\ldots,t_n\}$, for $n \geq 0$, is the set of shared variables
and constants that occur in $\atom{b}$. Also, let $G_\dep$ be the
dependency graph of $\dep$. We say that $\atom{a}$ \emph{covers}
$\atom{b}$ w.r.t.~$q$ and $\dep$, written as $\atom{a} \prec_{\dep}^{q}
\atom{b}$, if for each $i \in [n]$:
\begin{inparaenum}[(i)]
\item the term $t_i$ occurs also in $\atom{a}$, and 
\item if $t_i$ occurs in $\atom{a}$
and $\atom{b}$ at positions $\Pi_{\atom{a},i}$ and
$\Pi_{\atom{b},i}$, respectively, then, there exists an integer $k
\geq 2$ and a set of TGDs $\{\sigma_1,\ldots,\sigma_{k-1}\}
\subseteq \dep$, where $\mathit{eq}(\body{\sigma_1}) \subseteq
\mathit{eq}(\atom{a})$ and, for each $j \in [k-2]$,
$\mathit{eq}(\body{\sigma_{j+1}}) \subseteq
\mathit{eq}(\head{\sigma_j})$, such that,
for each $\pi \in \Pi_{\atom{b},i}$, in $G_\dep$ there exists a path
$\pi_{i_1} \pi_{i_2} \ldots \pi_{i_k}$, where $\pi_{i_1} \in
\Pi_{\atom{a},i}$, $\pi_{i_k} = \pi$, and
$\lambda((\pi_{i_j},\pi_{i_{j+1}})) = \sigma_j$, for each $j \in
[k-1]$.
\end{inparaenum}
\label{def:coverage}
\end{definition}

Condition \emph{(i)} ensures that by removing $\atom{b}$ from $q$ we
do not loose any constant, and also all the joins between $\atom{b}$
and the other atoms of $\body{q}$, except $\atom{a}$, are preserved.
Condition \emph{(ii)} guarantees that the atom $\atom{b}$ is
logically implied (w.r.t. $\dep$) by the atom $\atom{a}$, and
therefore can be eliminated.

\begin{lemma}\label{lem:atom-coverage}
Consider a BCQ $q$ over a schema $\R$, and a set $\dep$ of linear
TGDs over $\R$. Suppose that $\atom{a} \prec_{\dep}^{q} \atom{b}$,
where $\atom{a},\atom{b} \in \body{q}$, and $q'$ is the BCQ obtained
from $q$ by eliminating the atom $\atom{b}$. Then, $I \models q$ iff
$I \models q'$, for each instance $I$ that satisfies $\dep$.
\end{lemma}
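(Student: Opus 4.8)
The plan is to prove the two directions of the biconditional separately, with essentially all the work in the ``if'' direction.

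The direction $I \models q \Rightarrow I \models q'$ is immediate: since $q'$ is obtained from $q$ by deleting the single atom $\atom{b}$, we have $\body{q'} \subseteq \body{q}$, so any homomorphism $h$ witnessing $h(\body{q}) \subseteq I$ restricts to one witnessing $I \models q'$. For the converse I would start from a homomorphism $h'$ with $h'(\body{q'}) \subseteq I$ and try to extend it to a homomorphism $h$ with $h(\body{q}) \subseteq I$; the only atom left to account for is $\atom{b}$. The essential claim is that $I$ contains an atom $c$ over the predicate of $\atom{b}$ which, for every shared variable or constant $t_i$ of $\atom{b}$, carries the value $h'(t_i)$ at every position in $\Pi_{\atom{b},i}$. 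Granting this, I would let $h$ agree with $h'$ on $\body{q'}$ and read off the values at the remaining (non-shared) positions of $\atom{b}$ from $c$; condition (i) of coverage guarantees that each $t_i$ also occurs in $\atom{a} \in \body{q'}$, so $h'(t_i)$ is already fixed and consistent with $c$, and $c$ then serves as $h(\atom{b})$, yielding $h(\body{q}) \subseteq I$ and hence $I \models q$.

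To produce the witness atom $c$ I would exploit universality of the chase. Since $\atom{a} \in \body{q'}$, we have $h'(\atom{a}) \in I$, so $I$ is a model of $\{h'(\atom{a})\} \cup \dep$; by universality there is a homomorphism $g$ from $\chase{\{h'(\atom{a})\}}{\dep}$ into $I$ sending $h'(\atom{a})$ to itself. It therefore suffices to locate the required atom $c^{*}$ inside $\chase{\{h'(\atom{a})\}}{\dep}$ and set $c = g(c^{*})$. Here condition (ii) of coverage enters: the chain $\sigma_1,\ldots,\sigma_{k-1}$ with $\mathit{eq}(\body{\sigma_1}) \subseteq \mathit{eq}(\atom{a})$ and $\mathit{eq}(\body{\sigma_{j+1}}) \subseteq \mathit{eq}(\head{\sigma_j})$ describes a chase derivation. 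The first inclusion makes $\sigma_1$ applicable to $h'(\atom{a})$, each further inclusion makes $\sigma_{j+1}$ applicable to the atom just produced (exactly the observation recorded after Definition~\ref{def:equality-type}), and since the TGDs are linear every step maps one atom to one atom. Following the dependency-graph path $\pi_{i_1}\ldots\pi_{i_k}$ edge by edge, the frontier variable labelling each edge copies the value sitting at $\pi_{i_1} \in \Pi_{\atom{a},i}$, namely $h'(t_i)$, from one atom to the next, so the terminal atom of the derivation carries $h'(t_i)$ at the target position $\pi_{i_k} = \pi$.

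The main obstacle is that coverage supplies such a path, indeed a whole chain of TGDs, \emph{separately for each} shared term $t_i$, whereas what I need is a \emph{single} atom $c^{*}$ realizing all of them at once; and a linear-TGD step preserves only the terms lying on the frontier, discarding the rest into fresh nulls. The delicate part of the argument is therefore to assemble the per-term derivations into one common chase derivation from $h'(\atom{a})$ whose terminal atom simultaneously transports every $t_i$ to all its positions $\Pi_{\atom{b},i}$. The plan here is an induction on the shared terms of $\atom{b}$: I would argue that the equality-type conditions force the successive chains to agree along a common prefix so that they can be taken to coincide as a single derivation, and that a term once delivered to a target position of $\atom{b}$ is not overwritten by later steps. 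Once this single atom $c^{*}$ is secured in $\chase{\{h'(\atom{a})\}}{\dep}$, the remaining bookkeeping (pushing it into $I$ through $g$ and merging with $h'$) is routine.
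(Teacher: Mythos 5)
Your proposal takes essentially the same route as the paper's proof: the forward direction is the identical one-line restriction argument, and for the converse the paper likewise reduces everything to exhibiting a single atom $\atom{c} \in I$ that agrees with $\atom{b}$ on its shared terms and constants, obtained from condition (ii) of coverage via the observation that the equality-type inclusions make each TGD in the chain trigger the next along the dependency-graph path. The paper only sketches this, dismissing the existence of $\atom{c}$ with ``it is not difficult to verify''; your elaboration through universality of $\chase{\{h'(\atom{a})\}}{\dep}$, and your explicit flagging of the need to merge the per-term derivations into one terminal atom, fill in precisely the detail the paper leaves implicit.
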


\begin{proofsketch}
($\Rightarrow$) By hypothesis, there exists a homomorphism $h$ such
that $h(\body{q}) \subseteq I$. Since, by definition of $q'$,
$\body{q'} \subset \body{q}$, we immediately get that $h(\body{q'})
\subseteq I$, which implies that $I \models q'$.

\vspace{-2mm}

($\Leftarrow$) Conversely, there exists a homomorphism $h$ such that
$h(\body{q'}) \subseteq I$, and thus $h(\body{q} \setminus
\{\atom{b}\}) \subseteq I$. It suffices to show that there exists an
extension of $h$ which maps $\atom{b}$ to $I$. Since $\atom{a}
\prec_{\dep}^{q} \atom{b}$, it is not difficult to verify that there
exists an atom $\atom{c} \in I$ such that $\mathit{eq}(\atom{b}) =
\mathit{eq}(\atom{c})$, which implies that there exists a
substitution $\mu$ such that $\mu(\atom{b}) = \atom{c}$, and also
$\mu$ is compatible with $h$. Consequently, $(h \cup \mu)(\body{q})
\subseteq I$, and thus $I \models q$. \qed
\end{proofsketch}

An \emph{atom elimination strategy} for a BCQ is a permutation of
its body-atoms. Given a BCQ $q$ and a set $\dep$ of linear TGDs, the
set of atoms of $\body{q}$ that cover $\atom{a} \in \body{q}$
w.r.t.~$\dep$, denoted as $\mathit{cover}(\atom{a},q,\dep)$, is the
set $\{\atom{b}~|~\atom{b} \in \body{q} \textrm{~and~} \atom{b}
\prec_{\dep}^{q} \atom{a}\}$; when $q$ and $\dep$ are obvious from
the context, we shall denote the above set as
$\mathit{cover}(\atom{a})$. By exploiting the cover set of the atoms
of $\body{q}$, we associate to each atom elimination strategy $S$
for $q$ a subset of $\body{q}$, denoted
$\mathit{eliminate}(q,S,\dep)$, which is the set of atoms of
$\body{q}$ that can be safely eliminated (according to $S$) in order
to obtain a logically equivalent query (w.r.t.~$\dep$) with less
atoms in its body. Formally, $\mathit{eliminate}(q,S,\dep)$ is
computed by applying the following procedure; in the sequel, let $S
= [\atom{a}_1,\ldots,\atom{a}_n]$, where
$\{\atom{a}_1,\ldots,\atom{a}_n\} = \body{q}$:

\vspace{-2mm}

\begin{itemize}\itemsep-\parsep
\item[] $A := \emptyset$;
\item[] \textbf{foreach} $i := 1$ to $n$ \textbf{do}
\item[] \quad $\atom{a} := S[i]$;
\item[] \quad \textbf{if} $\mathit{cover}(\atom{a}) \neq \emptyset$ \textbf{then}
\item[] \quad \quad $A := A \cup \{\atom{a}\}$;
\item[] \quad \quad \textbf{foreach} $\atom{b} \in \body{q} \setminus
A$ \textbf{do}
\item[] \quad \quad \quad $\mathit{cover}(\atom{b}) := \mathit{cover}(\atom{b})
\setminus \{\atom{a}\}$;
\item[] return $A$.
\end{itemize}

\vspace{-2mm}

By exploiting the fact that the binary relation $\prec_{\dep}^{q}$
is transitive, it is possible to establish the uniqueness
(w.r.t.~the number of the eliminated atoms) of the atom elimination
strategy for a BCQ. In particular, the following lemma can be shown.

\begin{lemma}\label{lem:unique-elimination-strategy}
Consider a BCQ $q$ over a schema $\R$, and a set $\dep$ of linear
TGDs over $\R$. Let $S_1$ and $S_2$ be arbitrary elimination
strategies for $q$. It holds that, $|\mathit{eliminate}(q,S_1,\dep)|
= |\mathit{eliminate}(q,S_2,\dep)|$.
\end{lemma}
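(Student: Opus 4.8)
The plan is to give a characterization of the set of atoms that \emph{survive} the elimination procedure that does not refer to the chosen strategy $S$, and then to count the survivors. Throughout I read $\mathit{cover}(\atom{a})$ as the atoms $\atom{b} \in \body{q}$ with $\atom{b} \neq \atom{a}$ and $\atom{b} \prec_{\dep}^{q} \atom{a}$, and I model $\prec_{\dep}^{q}$ as a directed graph $G$ on $\body{q}$ with an edge $\atom{b} \to \atom{a}$ whenever $\atom{b}$ covers $\atom{a}$. The only property of $\prec_{\dep}^{q}$ that I will use is transitivity, exactly as the text suggests.

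First I would exploit transitivity to expose the structure of $G$. Consider the strongly connected components (SCCs) of $G$, equivalently the classes of the relation ``$\atom{a}$ and $\atom{b}$ mutually cover each other'', which is an equivalence relation precisely because $\prec_{\dep}^{q}$ is transitive. Since $\prec_{\dep}^{q}$ already equals its own transitive closure, inside each SCC any two distinct atoms cover each other, so every SCC is a clique of $G$, and the condensation of $G$ is a strict partial order on the SCCs. Call an SCC a \emph{source} if no atom outside it covers any atom inside it. By transitivity, if a source SCC $C$ is an ancestor of an SCC $C'$ in the condensation, then every atom of $C$ covers every atom of $C'$.

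The heart of the argument is the following strategy-independent claim: the procedure keeps \emph{exactly one} atom of each source SCC and \emph{no} atom of any non-source SCC. For a source SCC, its atoms are covered only by one another, so eliminating atoms of other components never shrinks their cover sets; processing its members in the order induced by $S$, each is eliminated while some other member is still present (e.g.\ any member not yet processed), and the last member to be processed finds its cover set emptied and is therefore kept, giving exactly one survivor. For a non-source SCC $C'$, pick a source ancestor $C$; whichever atom $a^{*}$ of $C$ the procedure keeps covers every atom of $C'$ and is itself never eliminated, so $a^{*}$ permanently belongs to the cover set of each atom of $C'$ and forces all of them to be eliminated when processed. Hence the number of survivors equals the number of source SCCs, a quantity determined by $q$ and $\dep$ alone; since $|\mathit{eliminate}(q,S,\dep)|$ equals $|\body{q}|$ minus the number of survivors, it is the same for $S_1$ and $S_2$.

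The main obstacle is the non-source case: I must be sure that the survivor $a^{*}$ of the ancestor source is genuinely in the cover set of each atom of $C'$ at the moment that atom is processed. This requires two facts together, namely that \emph{every} atom of the source covers every atom of $C'$ (obtained by lifting the condensation path to a direct cover via transitivity, so that the identity of the survivor is irrelevant) and that a kept atom is, by construction of the procedure, never deleted from any cover set. A secondary point to check carefully is that for a source SCC the cover sets encountered during processing are exactly the not-yet-eliminated SCC-members, which holds because a source has no external covers, so eliminations in other components cannot disturb these sets.
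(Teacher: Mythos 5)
Your proof is correct. Note that the paper itself gives no proof of this lemma at all: it merely remarks that the uniqueness of the count ``can be shown'' by exploiting the transitivity of $\prec_{\dep}^{q}$, so there is no official argument to compare against. Your SCC/condensation analysis is a complete and valid realization of that hint, and in fact proves something slightly stronger than the stated lemma: the set of surviving atoms is, up to the choice of one representative per source component, independent of the strategy, and the invariant count of eliminated atoms is exactly $|\body{q}|$ minus the number of source SCCs. The two points you flag as delicate are handled correctly: a kept atom is never removed from any cover set (removals are triggered only by eliminations), and lifting a condensation path to a direct coverage edge via transitivity needs mutual coverage only between \emph{distinct} atoms inside intermediate SCCs, so no reflexivity of $\prec_{\dep}^{q}$ is required. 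Two small caveats: first, your reading of $\mathit{cover}(\atom{a})$ as excluding $\atom{a}$ itself is a (sensible) repair of the paper's literal definition, which does not rule out self-coverage and under which a self-covering atom with no other covers would be spuriously eliminated; second, your argument takes transitivity of $\prec_{\dep}^{q}$ on faith from the text, which is legitimate here but is itself a nontrivial claim (condition (ii) of Definition~\ref{def:coverage} requires composing the equality-type chains of two coverage witnesses), so the overall result rests on that unproven assertion exactly as the paper's does.
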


Since the elimination strategy for a query is unique (w.r.t.~the
number of the eliminated atoms), in the rest of this section we
refer to the set of atoms that can be safely eliminated from a query
$q$ (w.r.t.~a set $\dep$ of linear TGDs) by
$\mathit{eliminate}(q,\dep)$.

\vspace{-2mm}

We are now ready to describe how query elimination works. During the
execution of the rewriting algorithm \textsf{TGD-rewrite} (see
Algorithm~\ref{alg:tgd-rewrite}), after the factorization step and
the rewriting step the so-called \emph{elimination} step is applied.
In particular, the factorized query $q'$ obtained during the
factorization step is the query
$\mathit{eliminate}(\mathit{factorize}(q,\sigma),\dep)$, while the
rewritten query obtained during the rewriting step is the query
$\mathit{eliminate}(\gamma_{A \cup \{\head{\sigma}\}}(q[A /
\body{\sigma}]),\dep)$. Moreover, instead of adding the given query
$q$ in $Q_{\textsc{rew}}$, we add the eliminated query. In
particular, the first line of the algorithm is replaced by
$Q_{\textsc{rew}} := \tup{\mathit{eliminate}(q),1}$. An example of
query elimination follows.

\begin{example}[\it Query Elimination]
{\rm Consider the set $\dep$ of TGDs of Example~\ref{exm:updg}, and
the BCQ
\[
\begin{array}{rcl}
q() &\leftarrow& \underbrace{p(A,B)}_{\atom{a}},
\underbrace{r(A,B,C)}_{\atom{b}}, \underbrace{s(A,A,D)}_{\atom{c}}.
\end{array}
\]
Based on the Definition~\ref{def:coverage}, it is an easy task to
verify that $\mathit{cover}(\atom{a}) = \emptyset$,
$\mathit{cover}(\atom{b}) = \{\atom{a}\}$ and
$\mathit{cover}(\atom{c}) = \emptyset$. Therefore, the output of the
function $\mathit{eliminate}(q,\dep)$ is the singleton set
$\{\atom{b}\}$. Consequently, by applying the elimination step we
get the BCQ $q() \la p(A,B),s(A,A,D)$.}
\label{exm:query-reduction}
\end{example}

As already mentioned, the fact that an atom $\atom{a}$ covers some
atom $\atom{b}$, means that $\atom{b}$ is logically implied (w.r.t.
the given set of TGDs) by $\atom{a}$. However, as shown by the
following example, this fact is not also necessary for the
implication of $\atom{b}$ by $\atom{a}$.

\begin{example}[\it Atom Implication]
{\rm Consider the set $\dep$ of TGDs of Example~\ref{exm:updg}, and
the BCQ $q$
\[
\begin{array}{rcl}
q() &\leftarrow& \underbrace{r(A,A,c)}_{\atom{a}},
\underbrace{p(A,A)}_{\atom{b}},
\end{array}
\]
where $c$ is a constant of $\Delta_c$. Observe that $\atom{a}$ does
not cover $\atom{b}$ since, despite the existence of the paths $r[1]
s[1] p[1]$ and $r[2] s[3] p[2]$ in the dependency graph of $\dep$,
$\mathit{eq}(\body{\sigma_3}) \not\subseteq
\mathit{eq}(\head{\sigma_2})$.
However, $\atom{b}$ is logically implied (w.r.t. $\dep$) by
$\atom{a}$. In particular, for every instance $I$ that satisfies
$\dep$, if $I \models \atom{a}$, which implies that an atom of the
from $r(V,V,c)$ exists in $I$, then due to the TGDs $\sigma_2$ and
$\sigma_3$ there exists also an atom $p(V,V)$, and thus $I \models
\atom{b}$.
Note that such cases are identified by the C\&B
algorithm~\cite{DPT06}. Nevertheless, as already criticized in
Section~\ref{sec:related}, this requires to pay a price in the
number of queries in the rewritten query.}
\label{exm:atom-implication}
\end{example}

It is not difficult to see that the function \textit{eliminate} runs
in quadratic time in the number of atoms of $\body{q}$ (by
considering the given set of TGDs as fixed). In particular, to
compute the cover set of each body-atom of $q$ we need to consider
all the pairs of atoms of $\body{q}$. Note that the problem whether
a certain atom $\atom{a}$ covers some other atom $\atom{b}$ is
feasible in constant time since the given set of TGDs (and thus its
dependency graph) is fixed.

\vspace{-2mm}

The following result implies  that the rewriting algorithm
$\textsf{TGD-rewrite}^{\star}$, obtained from \textsf{TGD-rewrite}
by applying the additional step of elimination, is still sound and
complete.

\begin{theorem}\label{the:sound-complete}
Consider a BCQ $q$ over a schema $\R$, a database $D$ for $\R$, and
a set $\dep$ of linear TGDs over $\R$. Then, $D \models
\textsf{TGD-rewrite}^{\star}(\R,\dep,q)$ iff $D \cup \dep \models
q$.
\end{theorem}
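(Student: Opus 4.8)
The plan is to leverage the already-established correctness of the unoptimized algorithm (Theorem~\ref{the:TGD-rewrite-sound-complete}) together with the fact that the newly inserted elimination step only replaces a query by a $\dep$-equivalent subquery. Write $q^{\star}_{\dep} = \textsf{TGD-rewrite}^{\star}(\R,\dep,q)$. The one foundational observation I would record first is that, for every query $p$ arising in the run, $\mathit{eliminate}(p,\dep)$ is logically equivalent to $p$ over \emph{every} instance satisfying $\dep$, and $\body{\mathit{eliminate}(p,\dep)} \subseteq \body{p}$. Equivalence follows from Lemma~\ref{lem:atom-coverage} applied iteratively: the elimination procedure removes an atom only when, at the moment of removal, it is still covered by a surviving atom, and using transitivity of $\prec_{\dep}^{q}$ (the same property underlying Lemma~\ref{lem:unique-elimination-strategy}) one checks that the final surviving set covers every removed atom, so the repeated single-atom removals each preserve equivalence over models of $\dep$.

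For soundness ($\Rightarrow$) I would mimic Lemma~\ref{lem:sound-auxiliary-lemma}. Assume $D \models q^{\star}_{\dep}$; since $D \subseteq \chase{D}{\dep}$ this gives $\chase{D}{\dep} \models q^{\star}_{\dep}$, and it remains to show $\chase{D}{\dep} \models q$, which yields $D \cup \dep \models q$. I induct on the number of factorization/rewriting/elimination steps used to produce the witnessing query, reversing them one at a time over the fixed instance $\chase{D}{\dep}$, which is a model of $\dep$. A rewriting step is reversed exactly as in the inductive step of Lemma~\ref{lem:sound-auxiliary-lemma}; a factorization step is reversed trivially since a factorized query logically implies the query it was obtained from; and an elimination step is reversed by the foundational observation, because $\chase{D}{\dep} \models \dep$ makes $\chase{D}{\dep} \models \mathit{eliminate}(p,\dep)$ equivalent to $\chase{D}{\dep} \models p$. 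Thus soundness costs essentially nothing beyond Lemma~\ref{lem:sound-auxiliary-lemma}.

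For completeness ($\Leftarrow$) I would reduce to the unoptimized algorithm. From $D \cup \dep \models q$, i.e.\ $\chase{D}{\dep} \models q$, Theorem~\ref{the:TGD-rewrite-sound-complete} gives $D \models p$ for some $p \in q_{\dep}$ produced by $\textsf{TGD-rewrite}$. I would then prove, by induction on the run of $\textsf{TGD-rewrite}$, a \emph{simulation}: to each query $p$ reachable by $\textsf{TGD-rewrite}$ one can associate a query $\rho(p)$ reachable by $\textsf{TGD-rewrite}^{\star}$ admitting a homomorphism $\body{\rho(p)} \to \body{p}$. Composing this homomorphism with the one witnessing $D \models p$ yields $D \models \rho(p)$, and since $\rho(p) \in q^{\star}_{\dep}$ this establishes $D \models q^{\star}_{\dep}$. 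The base case is the initial query, for which $\rho(q) = \mathit{eliminate}(q,\dep)$ works by the foundational observation; the factorization and elimination steps are routine since they only specialize or shrink the body.

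The main obstacle is the inductive step of the simulation corresponding to a rewriting step of $\textsf{TGD-rewrite}$ that rewrites an atom set $A \subseteq \body{p}$ which the optimized run has (partly) eliminated in $\rho(p)$. When the atoms of $A$ survive in $\rho(p)$ the same rewriting (followed by elimination) can be replayed, and one checks the homomorphism is preserved. The delicate case is when some $\atom{b} \in A$ was eliminated because it is covered by a surviving atom $\atom{a}$: here I must argue that rewriting $\atom{b}$ is redundant for the optimized run. This is precisely where atom coverage pays off, since $\atom{a} \prec_{\dep}^{q} \atom{b}$ means $\atom{b}$ is logically implied by $\atom{a}$ through a chain of TGDs recorded in the dependency graph, so the query obtained by backward-rewriting $\atom{b}$ is subsumed by one already reachable via $\atom{a}$; linearity of $\dep$ is essential here, exactly as in Lemma~\ref{lem:atom-coverage}. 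Making this subsumption precise --- tracking how the homomorphism $\body{\rho(p)} \to \body{p}$ is maintained across such steps --- is the technical heart of the proof; once it is in place, the two directions combine to give $D \models q^{\star}_{\dep}$ iff $D \cup \dep \models q$.
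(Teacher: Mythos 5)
Your proposal is correct and follows essentially the same route as the paper, whose own argument is only a two-line sketch deriving the theorem from Theorem~\ref{the:TGD-rewrite-sound-complete} together with Lemma~\ref{lem:atom-coverage}. You in fact supply considerably more of the missing detail (the iterated application of Lemma~\ref{lem:atom-coverage} over models of $\dep$, and the simulation between the optimized and unoptimized runs needed for completeness) than the paper's proof sketch does.
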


\begin{proofsketch}
This result follows from the fact that the algorithm
\textsf{TGD-rewrite} is sound and complete under linear TGDs (see
Theorem~\ref{the:TGD-rewrite-sound-complete}) and
Lemma~\ref{lem:atom-coverage}. \qed
\end{proofsketch}

It is important to clarify that the above result does not hold if we
consider arbitrary TGDs (as in
Theorem~\ref{the:TGD-rewrite-sound-complete}). This is because the
proof of Lemma~\ref{lem:atom-coverage}, which states that atom
coverage implies logical implication (w.r.t. the given set of TGDs),
is based heavily on the linearity of TGDs.
Termination of $\textsf{TGD-rewrite}^{\star}$ follows immediately
from the fact that \textsf{TGD-rewrite} terminates under linear TGDs
(see Theorem~\ref{the:TGD-rewrite-termination}).

\section{Implementation and Experimental Setting}
\label{sec:experiments}

$\textsf{TGD-rewrite}$ (without the additional check described in
Subsection~\ref{subsec:ncs}) and the query elimination technique
presented in Section~\ref{sec:optimization} have been implemented in
the prototype system Nyaya~\cite{DOT*11} available at
\url{http://mais.dia.uniroma3.it/Nyaya}. The reasoning and query
answering engine is based on the IRIS Datalog
engine\footnote{\url{http://www.iris-reasoner.org/}.} extended to
support the FO-rewritable fragments of the {\plusminus} family. In
particular, we extended IRIS to natively support existential
variables in the head without introducing function symbols and to
support the constant $\mathit{false}$ as head of a rule (used to
represent negative constraints). Both IRIS and our extension are
implemented in Java.

\vspace{-2mm}

Since $\textsf{TGD-rewrite}$ is designed for reasoning over
ontologies with large ABoxes, we put ourselves in a similar
experimental setting such that of~\cite{PMH09}. Thus, we use
DL-Lite$_{\mathcal{R}}$ ontologies with a varying number of axioms.
The queries under consideration are based on canonical examples used
in the research projects where these ontologies have been developed.
VICODI (V) is an ontology of European history, and developed in the
EU-funded VICODI project\footnote{\url{http://www.vicodi.org}.}.
STOCKEXCHANGE (S) is an ontology for representing the domain of
financial institutions of the European Union. UNIVERSITY (U) is a
DL-Lite$_{\mathcal{R}}$ version of the LUBM
Benchmark\footnote{\url{http://swat.cse.lehigh.edu/projects/lubm/}.},
developed at Lehigh University, and describes the organizational
structure of universities. ADOLENA (A) (Abilities and Disabilities
OntoLogy for ENhancing Accessibility) is an ontology developed for
the South African National Accessibility Portal, and describes
abilities, disabilities and devices. The Path5 (P5) ontology is a
synthetic ontology encoding graph structures and used to generate an
exponential-blowup of the size of the rewritten queries. Recall that
the transformation of a set of TGDs into an equivalent set of
single-head TGDs with a single existential variable can introduce
auxiliary predicates and rules (see Lemmas~\ref{lem:one-head-atom}
and~\ref{lem:one-exist-variable}). The ontologies UX, AX and P5X are
equivalent ontologies to U, A and P5 where the auxiliary predicates
are considered part of the schema. These ontologies allow to study
the impact of such transformations on the size of the rewriting.

\vspace{-2mm}

We compared our implementation with two other rewriting-based query
answering systems for FO-rewritable ontologies:
QuOnto\footnote{\url{http://www.dis.uniroma1.it/quonto/}.}, based
on~\cite{CDLLLR07} and developed by the University of Rome La
Sapienza, and
Requiem\footnote{\url{http://www.comlab.ox.ac.uk/projects/requiem/home.html}.},
based on~\cite{PMH09} and developed by the Knowledge Representation
group of the University of Oxford.

\begin{table*}
\caption{Evaluation of Nyaya System.} \label{tab:eval} \centering
\tiny
\begin{tabular}{cc|c|c|c|c|c|c|c|c|c|c|c|c|}
\cline{3-14}
~ & ~ & \multicolumn{4}{c|}{Size} & \multicolumn{4}{c|}{Length} & \multicolumn{4}{c|}{Width} \\
\cline{3-14}
~&~&~&~&~&~&~&~&~&~&~&~&~&~\\
~ & ~ & QO & RQ & NY & NY$^{\star}$ & QO & RQ & NY & NY$^{\star}$ & QO & RQ & NY & NY$^{\star}$\\
~&~&~&~&~&~&~&~&~&~&~&~&~&~\\
\hline
\multicolumn{1}{|c|}{\multirow{5}{*}{V}}    & $q_1$ & 15     & 15    & 15    & 15  & 15     & 15     & 15      & 15    & 0      & 0      & 0      & 0    \\
\multicolumn{1}{|c|}{}                      & $q_2$ & 11     & 10    & 10    & 10  & 32     & 30     & 30      & 30    & 31     & 30     & 30     & 30   \\
\multicolumn{1}{|c|}{}                      & $q_3$ & 72     & 72    & 72    & 72  & 216    & 216    & 216     & 216   & 144    & 144    & 144    & 144  \\
\multicolumn{1}{|c|}{}                      & $q_4$ & 185    & 185   & 185   & 185 & 555    & 555    & 555     & 555   & 370    & 370    & 370    & 370  \\
\multicolumn{1}{|c|}{}                      & $q_5$ & 150    & 30    & 30    & 30  & 900    & 210    & 210     & 210   & 1,110  & 270    & 270    & 270  \\
\hline
\multicolumn{1}{|c|}{\multirow{5}{*}{S}}    & $q_1$ & 6      & 6     & 6     & 6   & 6       & 6      & 6      & 6     & 0      & 0      & 0      & 0   \\
\multicolumn{1}{|c|}{}                      & $q_2$ & 204    & 160   & 160   & 2   & 566     & 480    & 480    & 2     & 362    & 320    & 320    & 0   \\
\multicolumn{1}{|c|}{}                      & $q_3$ & 1,194  & 480   & 480   & 4   & 5,026   & 2,400  & 2,400  & 8     & 4,778  & 2,400  & 2,400  & 4   \\
\multicolumn{1}{|c|}{}                      & $q_4$ & 1,632  & 960   & 960   & 4   & 7,384   & 4,800  & 4,800  & 8     & 7,112  & 4,800  & 4,800  & 4   \\
\multicolumn{1}{|c|}{}                      & $q_5$ & 11,487 & 2,880 & 2,880 & 8   & 67,664  & 20,160 & 20,160 & 24    & 84,064 & 25,920 & 25,920 & 24  \\
\hline
\multicolumn{1}{|c|}{\multirow{5}{*}{U}}    & $q_1$ & 5      & 2     & 2     & 2   & 10      & 4      & 4      & 4  & 5       & 2      & 2       & 2    \\
\multicolumn{1}{|c|}{}                      & $q_2$ & 287    & 148   & 148   & 1   & 813     & 444    & 444    & 1  & 526     & 296    & 296     & 0    \\
\multicolumn{1}{|c|}{}                      & $q_3$ & 1,260  & 224   & 224   & 4   & 7,296   & 1,344  & 1,344  & 16 & 10,812  & 2,016  & 2,016   & 20   \\
\multicolumn{1}{|c|}{}                      & $q_4$ & 5,364  & 1,628 & 1,628 & 2   & 15,723  & 4,884  & 4,884  & 2  & 10,393  & 3,256  & 3,256   & 0    \\
\multicolumn{1}{|c|}{}                      & $q_5$ & 9,245  & 2,960 & 2,960 & 10  & 35,710  & 11,840 & 11,840 & 20 & 52,970  & 17,760  & 17,760 & 20   \\
\hline
\multicolumn{1}{|c|}{\multirow{5}{*}{A}}    & $q_1$ & 783    & 402 & 402 & 247 & 1,540   & 779   & 779 & 197   & 757     & 377  & 377 & 86   \\
\multicolumn{1}{|c|}{}                      & $q_2$ & 1,812  & 103 & 103 & 92  & 5,350   & 256   & 256 & 234   & 3,538   & 153 & 153  & 142   \\
\multicolumn{1}{|c|}{}                      & $q_3$ & 4,763  & 104 & 104 & 104 & 23,804  & 520   & 520 & 520   & 23,804  & 520 & 520  & 520   \\
\multicolumn{1}{|c|}{}                      & $q_4$ & 7,251  & 492 & 492 & 454 & 21,406  & 1,288 & 1,288 & 1,212   & 14,155  & 796  & 796 &     758  \\
\multicolumn{1}{|c|}{}                      & $q_5$ & 66,068 & 624 & 624 & 624 & 195,042 & 3,120  & 3,120  & 3,120 & 128,974  & 3,120  & 3,120 & 3,120  \\
\hline
\multicolumn{1}{|c|}{\multirow{5}{*}{P5}}   & $q_1$ & 14     & 6   & 6  & 6   & 14      & 6  & 6    & 6     & 0       & 0  & 0    & 0     \\
\multicolumn{1}{|c|}{}                      & $q_2$ & 86     & 10  & 10 & 10  & 156     & 16 & 16   & 16    & 70      & 6  & 6    & 6     \\
\multicolumn{1}{|c|}{}                      & $q_3$ & 538    & 13  & 13 & 13  & 1,413   & 29 & 29   & 29    & 900     & 16 & 16   & 16    \\
\multicolumn{1}{|c|}{}                      & $q_4$ & 3,620  & 15  & 15 & 15  & 14,430  & 44 & 44   & 44    & 10,260  & 29 & 29   & 29    \\
\multicolumn{1}{|c|}{}                      & $q_5$ & 25,256 & 16  & 16 & 16  & 107,484 & 60 & 60   & 60    & 103,361 & 44 & 44   & 44    \\
\hline
\multicolumn{1}{|c|}{\multirow{5}{*}{UX}}   & $q_1$ & 5      & 5     & 5     & 5    & 10      & 10       & 10       & 10 & 5      & 5      & 5      & 5   \\
\multicolumn{1}{|c|}{}                      & $q_2$ & 286    & 240   & 240   & 1    & 156     & 147      & 147      & 1  & 70     & 70     & 70     & 0   \\
\multicolumn{1}{|c|}{}                      & $q_3$ & 1,248  & 1,008 & 1,008 & 12   & 1,397   & 1,125    & 1,125    & 48 & 892    & 735    & 735    & 60  \\
\multicolumn{1}{|c|}{}                      & $q_4$ & 5,358  & 5,000 & 5,000 & 5    & 12,006  & 7,578    & 7,578    & 5  & 9,828  & 5,625  & 5,625  & 0   \\
\multicolumn{1}{|c|}{}                      & $q_5$ & 9,220  & 8,000 & 8,000 & 25   & 101,652 & 47,656   & 47,656   & 50 & 96,677 & 37,890 & 37,890 & 50  \\
\hline
\multicolumn{1}{|c|}{\multirow{5}{*}{AX}}   & $q_1$ & 783    & 782    & 782    & 555    & 1,543     & 1,541       & 1,541       & 1,084           & 763           & 761     & 761 & 529 \\
\multicolumn{1}{|c|}{}                      & $q_2$ & 1,812  & 1,781  & 1,781  & 1,737    & 3,589    & 3,528      & 3,528      & 3,514         & 3,576         & 3,516   & 3,516  & 3,401 \\
\multicolumn{1}{|c|}{}                      & $q_3$ & 4,763  & 4,752  & 4,752  & 4,741   & 27,705  & 23,760   & 23,760    & 23,760       & 23,824        & 23,815  & 23,815  & 23,694 \\
\multicolumn{1}{|c|}{}                      & $q_4$ & 7,251  & 7,100  & 7,100  & 6,564 & 7,739 & 7,578 & 7,578    & 6,178   & 5,744    & 5,625 & 5,625 & 5,201  \\
\multicolumn{1}{|c|}{}                      & $q_5$ & -      & -       & 76,032 & 76,032 & - & - & 81,173 & 81,173 & - & - & 95,942 & 95,942   \\
\hline
\multicolumn{1}{|c|}{\multirow{5}{*}{P5X}}  & $q_1$ & 14     & 14    & 14    & 14    & 14     & 14       & 14       & 14           & 0           & 0     & 0 & 0 \\
\multicolumn{1}{|c|}{}                      & $q_2$ & 86     & 77    & 77    & 66    & 156    & 147      & 147      & 121         & 70         & 70   & 70  & 55 \\
\multicolumn{1}{|c|}{}                      & $q_3$ & 530    & 390   & 390   & 329   & 1,397  & 1,125   & 1,125    & 925       & 892        & 735  & 735  & 596  \\
\multicolumn{1}{|c|}{}                      & $q_4$ & 3,476  & 1,953 & 1,953 & 1,644 & 12,006 & 7,578 & 7,578    & 6,263    & 9,828    & 5,625 & 5,625 & 4,619   \\
\multicolumn{1}{|c|}{}                      & $q_5$ & 23,744 & 9,766 & 9,766 & 8,219 & 101,652 & 47,656 & 47,656 & 39,531 & 96,677 & 37,890 & 37,890 & 31,312    \\
\hline
\hline
\end{tabular}
\end{table*}

Since $\textsf{TGD-rewrite}$, as well as the algorithms presented
in~\cite{CDLLLR07} and~\cite{PMH09}, are proven to be sound and
complete, the most relevant way of judging the quality of the
rewriting is the \emph{size} of the perfect rewriting, i.e., the
number of CQs in the perfect UCQ rewriting. In addition, we use two
additional metrics, namely, the \emph{length} of the rewriting,
i.e., the number of atoms in the perfect rewriting, and the
\emph{width}, i.e., the number of joins to be performed when the
rewritten query is executed. We believe these metrics to be more
appropriate than the number of symbols in the rewritten query used,
for example, in~\cite{PMH09}, since they allow to establish in a
more precise way the cost of executing the rewriting on a database
system. Table~\ref{tab:eval} reports the results of our
experiments\footnote{Additional data can be found on the Nyaya's Web
site.} while Table~\ref{tab:queries} shows the queries used in the
experiments. We use the symbol ``-'' to denote those cases where the
tool did not complete the rewriting within 15 minutes. By QO and RQ
we refer to the QuOnto and Requiem systems, respectively, while NY
and NY$^{\star}$ refer to Nyaya with factorisation and Nyaya with
both factorisation and query elimination, respectively. All the
tests have been performed on an Intel Core 2 Duo Processor at 2.50
GHz and 4GB of RAM. The OS is Ubuntu Linux 9.10 carrying a Sun JVM
Standard Edition with maximum heap size set at 2GB of RAM.

\begin{table*}
\caption{Test Queries}
\label{tab:queries}
\centering
\scriptsize
\begin{tabular}{|l|l|}
\hline
\multicolumn{1}{|c|}{TBox} & \multicolumn{1}{c|}{Queries} \\
\hline
\multicolumn{1}{|c|}{\multirow{5}{*}{V}} & $q_1(A) \leftarrow \mathit{ Location(A).}$\\
~ & $q_2(A,B) \leftarrow \mathit{Military\_Person(A), hasRole(B,A), related(A,C).}$\\
~ & $q_3(A,B) \leftarrow \mathit{Time\_Dependant\_Relation(A), hasRelationMember(A,B), Event(B).}$\\
~ & $q_4(A,B) \leftarrow \mathit{Object(A), hasRole(A,B), Symbol(B).}$\\
~ & $q_5(A) \leftarrow \mathit{Individual(A), hasRole(A,B), Scientist(B), hasRole(A,C), Discoverer(C), hasRole(A,D), Inventor(D).}$\\
\hline
\multicolumn{1}{|c|}{\multirow{5}{*}{S}} & $q_1(A) \leftarrow \mathit{ StockExchangeMember(A).}$\\
~ & $q_2(A,B) \leftarrow \mathit{ Person(A), hasStock(A,B), Stock(B).}$\\
~ & $q_3(A,B,C) \leftarrow \mathit{ FinantialInstrument(A), belongsToCompany(A,B), Company(B), hasStock(B,C), Stock(C).}$\\
~ & $q_4(A,B,C) \leftarrow \mathit{ Person(A), hasStock(A,B), Stock(B), isListedIn(B,C), StockExchangeList(C).}$ \\
~ & $q_5(A,B,C,D) \leftarrow \mathit{ FinantialInstrument(A), belongsToCompany(A,B), Company(B), hasStock(B,C), Stock(C),}$\\
~ & $\mathit{isListedIn(B,D), StockExchangeList(D).}$ \\
\hline
\multicolumn{1}{|c|}{\multirow{5}{*}{U(X)}} & $q_1(A) \leftarrow \mathit{ worksFor(A,B), affiliatedOrganizationOf(B,C).}$ \\
~ & $q_2(A,B) \leftarrow \mathit{ Person(A), teacherOf(A,B), Course(B).}$ \\
~ & $q_3(A,B,C) \leftarrow \mathit{ Student(A), advisor(A,B), FacultyStaff(B), takesCourse(A,C), teacherOf(B,C), Course(C).}$ \\
~ & $q_4(A,B) \leftarrow \mathit{ Person(A), worksFor(A,B), Organization(B).}$ \\
~ & $q_5(A) \leftarrow \mathit{ Person(A), worksFor(A,B), University(B), hasAlumnus(B,A).}$ \\
\hline
\multicolumn{1}{|c|}{\multirow{5}{*}{A(X)}} & $q_1(A) \leftarrow \mathit{ Device(A), assistsWith(A,B).}$ \\
~ & $q_2(A) \leftarrow \mathit{ Device(A), assistsWith(A,B), UpperLimbMobility(B).}$ \\
~ & $q_3(A) \leftarrow \mathit{ Device(A), assistsWith(A,B), Hear(B), affects(C,B), Autism(C).}$ \\
~ & $q_4(A) \leftarrow \mathit{ Device(A), assistsWith(A,B), PhysicalAbility(B).}$ \\
~ & $q_5(A) \leftarrow \mathit{ Device(A), assistsWith(A,B), PhysicalAbility(B), affects(C,B), Quadriplegia(C).}$ \\
\hline
\multicolumn{1}{|c|}{\multirow{5}{*}{P5(X)}} & $q_1(A) \leftarrow \mathit{ edge(A,B).}$ \\
~ & $q_2(A) \leftarrow \mathit{ edge(A,B), edge(B,C).}$ \\
~ & $q_3(A) \leftarrow \mathit{ edge(A,B), edge(B,C), edge(C,D).}$ \\
~ & $q_4(A) \leftarrow \mathit{ edge(A,B), edge(B,C), edge(C,D), edge(D,E).}$ \\
~ & $q_4(A) \leftarrow \mathit{ edge(A,B), edge(B,C), edge(C,D), edge(D,E), edge(E,F).}$ \\
\hline
\hline
\end{tabular}
\end{table*}

\vspace{-2mm}

As it can be seen, query elimination provides a substantial
advantage in terms of the size of the perfect rewriting for the
real-world ontologies A, U and S. In particular, for the queries
denoted as Q2 in U and S, our procedure eliminates all the redundant
atoms in the input query, and drastically reduces the number of
queries in the final rewriting.
On the other side, query elimination is not particularly effective in the synthetic test case P5 and
P5X, since these cases have been intentionally created in order to generate perfect rewritings of exponential size.

\section{Future Work}
\label{sec:conclusion}

We plan to investigate rewriting and optimization techniques for
sticky-join sets of TGDs, and alternative forms of rewriting such as
positive-existential queries. We also plan to develop improved
techniques for rewriting an ontological query into a non-recursive
Datalog program, rather than into a union of conjunctive queries
(recall the discussion in Section~\ref{sec:related}). While the
current approaches yield exponentially large non-recursive Datalog
programs, it is possible to rewrite queries and TBoxes into
non-recursive Datalog programs whose size is simultaneously
polynomial in the query and the TBox. This will be dealt in a
forthcoming paper.

\smallskip

\section*{Acknowledgments}
G. Gottlob's work was funded by the
European Research Council under the European Community's Seventh
Framework Programme (FP7/2007-2013)/ERC grant no.\ 246858 --
\mbox{DIADEM}. Gottlob gratefully acknowledges a Royal Society
Wolfson Research Merit Award.
G. Orsi and G. Gottlob also acknowledge the Oxford Martin School - Institute for the Future of Computing.
A. Pieris' work was funded by the EPSRC project ``Schema Mappings
and Automated Services for Data Integration and Exchange''
(EP/E010865/1).
We thank Micha\"{e}l Thomazo for his useful and constructive comments on the conference version of this paper.


\begin{thebibliography}{10}
\providecommand{\url}[1]{#1}
\csname url@rmstyle\endcsname
\providecommand{\newblock}{\relax}
\providecommand{\bibinfo}[2]{#2}
\providecommand\BIBentrySTDinterwordspacing{\spaceskip=0pt\relax}
\providecommand\BIBentryALTinterwordstretchfactor{4}
\providecommand\BIBentryALTinterwordspacing{\spaceskip=\fontdimen2\font plus
\BIBentryALTinterwordstretchfactor\fontdimen3\font minus
  \fontdimen4\font\relax}
\providecommand\BIBforeignlanguage[2]{{%
\expandafter\ifx\csname l@#1\endcsname\relax
\typeout{** WARNING: IEEEtran.bst: No hyphenation pattern has been}%
\typeout{** loaded for the language `#1'. Using the pattern for}%
\typeout{** the default language instead.}%
\else
\language=\csname l@#1\endcsname
\fi
#2}}

\bibitem{GoOP11}
G.~Gottlob, G.~Orsi, and A.~Pieris, ``Ontological queries: Rewriting and
  optimization,'' in \emph{Proc. of the 27th Intl Conf. on Data Engineering
  (ICDE)}, 2011, pp. 2--13.

\bibitem{Gruber93}
T.~R. Gruber, ``A translation approach to portable ontology specifications,''
  \emph{Knowledge Acquisition}, vol.~5, no.~2, pp. 199--220, 1993.

\bibitem{wiki-ontology}
\BIBentryALTinterwordspacing
Wikipedia, ``Ontology (information science),'' 2010. [Online]. Available:
  \url{{http://en.wikipedia.org/wiki/Ontology\_(information\_science)}.}
\BIBentrySTDinterwordspacing

\bibitem{McComb06}
\BIBentryALTinterwordspacing
D.~McComb, ``The enterprise ontology,'' 2006. [Online]. Available:
  \url{{http://www.tdan.com/view-articles/5016}.}
\BIBentrySTDinterwordspacing

\bibitem{CDLLLR07}
D.~Calvanese, G.~de~Giacomo, D.~Lembo, M.~Lenzerini, and R.~Rosati, ``Tractable
  reasoning and efficient query answering in description logics: {T}he
  {DL}-{L}ite {F}amily,'' \emph{Journal of Automated Reasoning}, vol.~39,
  no.~3, pp. 385--429, 2007.

\bibitem{Baader03}
F.~Baader, ``Terminological cycles in a description logic with existential
  restrictions,'' in \emph{Proc. of 18th Intl Joint Conf. on Artificial
  Intelligence (IJCAI)}, 2003, pp. 325--330.

\bibitem{CGL09}
A.~Cal\`{\i}, G.~Gottlob, and T.~Lukasiewicz, ``A general {D}atalog-based
  framework for tractable query answering over ontologies,'' in \emph{Proc. of
  the 28th Symp. on Principles of Database Systems (PODS)}, 2009, pp. 77--86.

\bibitem{CaGL09}
------, ``Datalog$^{\mbox{$\pm$}}$: {A} unified approach to ontologies and
  integrity constraints,'' in \emph{Proc. of the 12th Intl Conf. on Database
  Theory (ICDT)}, 2009, pp. 14--30.

\bibitem{CGP10}
A.~Cal\`{\i}, G.~Gottlob, and A.~Pieris, ``Advanced processing for ontological
  queries,'' in \emph{Proc. of the 36th Intl Conf. on Very Large Databases
  (VLDB)}, 2010, pp. 554--565.

\bibitem{CaGP10}
------, ``Query answering under non-guarded rules in {D}atalog$^{\pm}$,'' in
  \emph{Proc. of the 4th Intl conf. on Web Reasoning and Rule Systems (RR)},
  2010, pp. 1--17.

\bibitem{CGT89}
S.~Ceri, G.~Gottlob, and L.~Tanca, ``What you always wanted to know about
  {D}atalog (and never dared to ask),'' \emph{IEEE Transactions on Knowledge
  and Data Engineering}, vol.~1, no.~1, pp. 146--166, 1989.

\bibitem{BeVa84}
C.~Beeri and M.~Y. Vardi, ``A proof procedure for data dependencies,''
  \emph{Journal of the ACM}, vol.~31, no.~4, pp. 718--741, 1984.

\bibitem{Var95}
M.~Y. Vardi, ``On the complexity of bounded-variable queries,'' in \emph{Proc.
  of the 14th Symp. on Principles of Database Systems (PODS)}, 1995, pp.
  266--276.

\bibitem{WCGP10}
A.~Cal\`{\i}, G.~Gottlob, and A.~Pieris, ``Query rewriting under non-guarded
  rules,'' in \emph{Proc. of the 4th Alberto Mendelzon Intl Work. on
  Foundations of Data Management (AMW)}, 2010.

\bibitem{DPT06}
A.~Deutsch, L.~Popa, and V.~Tannen, ``Query reformulation with constraints,''
  \emph{SIGMOD Record}, vol.~35, pp. 65--73, 2006.

\bibitem{alexaki01}
S.~Alexaki, V.~Christophides, G.~Karvounarakis, D.~Plexousakis, and K.~Tolle,
  ``The {ICS}-{FORTH} {RDFSuite}: Managing voluminous {RDF} description
  bases,'' in \emph{Proc. of the 2nd Intl Workshop on the Semantic Web
  (SemWeb)}, 2001, pp. 109--113.

\bibitem{Ch05}
E.~Chong, S.~Das, G.~Eadon, and J.~Srinivasan, ``An efficient {SQL}-based {RDF}
  querying scheme,'' in \emph{Proc. of the 31th Intl Conf. on Very Large Data
  Bases (VLDB)}, 2005, pp. 1216--1227.

\bibitem{GLS99}
G.~Gottlob, N.~Leone, and F.~Scarcello, ``Hypertree decompositions and
  tractable queries,'' in \emph{In Proc. of the 18th Symp. on Principles of
  database systems (PODS)}, 1999, pp. 21--32.

\bibitem{PMH09}
H.~P\'erez-Urbina, B.~Motik, and I.~Horrocks, ``Efficient query answering for
  {OWL} 2,'' in \emph{Proc. of the 8th Intl Semantic Web Conf. (ESWC)}, 2009,
  pp. 489--504.

\bibitem{AR10}
R.~Rosati and A.~Almatelli, ``Improving query answering over {DL}-{L}ite
  ontologies,'' in \emph{In Proc. of the 20th Intl Conf. on Principles of
  Knowledge Representation (KR)}, 2010.

\bibitem{CM77}
A.~K. Chandra and P.~M. Merlin, ``Optimal implementation of conjunctive queries
  in relational data bases,'' in \emph{Proc. of the 9th ACM Symp. on Theory of
  Computing (STOC)}, 1977, pp. 77--90.

\bibitem{Hal01}
A.~Halevy, ``Answering queries using views: A survey,'' \emph{The VLDB
  Journal}, vol.~10, pp. 270--294, 2001.

\bibitem{DeTa03}
A.~Deutsch and V.~Tannen, ``Mars: {A} system for publishing {XML} from mixed
  and redundant storage,'' in \emph{In Proc. of the 29th Intl Conf. on Very
  large data bases (VLDB)}, 2003, pp. 201--212.

\bibitem{BeVa81}
C.~Beeri and M.~Y. Vardi, ``The implication problem for data dependencies,'' in
  \emph{Proc. of the 8th Colloquim on Automata, Languages and Programming
  (ICALP)}, 1981, pp. 73--85.

\bibitem{CGK08}
A.~Cal{\i}, G.~Gottlob, and M.~Kifer, ``Taming the infinite chase: Query
  answering under expressive relational constraints,'' in \emph{Proc. of the
  11th Intl Joint Conf. on Principles of Knowledge Representation and Reasoning
  (KR)}, 2008, pp. 70--80.

\bibitem{DNR08}
A.~Deutsch, A.~Nash, and J.~Remmel, ``The chase revisited,'' in \emph{Proc. of
  the 27th Symp. on Principles of Database Systems (PODS)}, 2008, pp. 149--158.

\bibitem{MMS79}
D.~Maier, A.~O. Mendelzon, and Y.~Sagiv, ``Testing implications of data
  dependencies,'' \emph{ACM Trans. on Database Systems}, vol.~4, no.~4, pp.
  455--469, 1979.

\bibitem{JoKl84}
D.~S. Johnson and A.~C. Klug, ``Testing containment of conjunctive queries
  under functional and inclusion dependencies,'' \emph{Journal of Computer and
  System Sciences}, vol.~28, no.~1, pp. 167--189, 1984.

\bibitem{FKMP05}
R.~Fagin, P.~Kolaitis, R.~Miller, and L.~Popa, ``Data exchange: {S}emantics and
  query answering,'' \emph{Theoretical Computer Science}, vol. 336, no.~1, pp.
  89--124, 2005.

\bibitem{Mar09}
B.~Marnette, ``Generalized schema-mappings: {F}rom termination to
  tractability,'' in \emph{In Proc. of the 28th Symp. on Principles of Database
  Systems (PODS)}, 2009, pp. 13--22.

\bibitem{CGP09}
A.~Cal\`{\i}, G.~Gottlob, and A.~Pieris, ``Tractable query answering over
  conceptual schemata,'' in \emph{Proc. of the 28th Intl Conf. on Conceptual
  Modeling (ER)}, 2009, pp. 175--190.

\bibitem{CGP10-ER}
------, ``Query answering under expressive {E}ntity-{R}elationship schemata,''
  in \emph{Proc. of the 29th Intl Conf. on Conceptual Modeling (ER)}, 2010, pp.
  347--361.

\bibitem{ChVa85}
A.~K. Chandra and M.~Y. Vardi, ``The implication problem for functional and
  inclusion dependencies is undecidable,'' \emph{SIAM Journal of Computing},
  vol.~14, no.~3, pp. 671--677, 1985.

\bibitem{AHV95}
S.~Abiteboul, R.~Hull, and V.~Vianu, \emph{Foundations of Databases}.\hskip 1em
  plus 0.5em minus 0.4em\relax Addison-Wesley, 1995.

\bibitem{CLR03}
A.~Cal\`{\i}, D.~Lembo, and R.~Rosati, ``On the decidability and complexity of
  query answering over inconsistent and incomplete databases,'' in \emph{Proc.
  of the 22nd Symp. on Principles of Database Systems (PODS)}, 2003, pp.
  260--271.

\bibitem{DOT*11}
R.~de~Virgilio, G.~Orsi, L.~Tanca, and R.~Torlone, ``Semantic data markets: a
  flexible environment for knowledge management,'' in \emph{Proc. of 20th Intl
  Conf. on Information and Knowledge Management (CIKM)}, 2011, pp. 1559--1564.

\end{thebibliography}
\end{document}